\newtheorem{theorem}{Theorem}
\newtheorem{remark}{Remark}
\newtheorem{definition}{Definition}
\newtheorem{lemma}{Lemma}
\newtheorem{corollary}{Corollary}
\newcounter{MYtempeqncnt}
\begin{document}
\graphicspath{{figures/}}
%
\title{Roto-Translation Invariant Formation of Fixed-Wing UAVs in 3D: Feasibility and Control}
%
%
%

\author{Xiaodong~He,
        Zhongkui~Li,
        Xiangke~Wang, and 
        Zhiyong Geng
\thanks{This work was supported by the National Natural Science Foundation of China (61973006, T2121002), and the China Postdoctoral Science Foundation (2022M720242).
\emph{(Corresponding author: Zhongkui Li.)}}
\thanks{X. He, Z. Li, and Z. Geng are with the State Key Laboratory for Turbulence and Complex Systems, Department of Mechanics and Engineering Science, College of Engineering, Peking University, Beijing 100871, China (e-mail: hxdupc@pku.edu.cn; zhongkli@pku.edu.cn; zygeng@pku.edu.cn)

X. Wang is with the College of Intelligence Science and Technology, National University of Defense Technology, Changsha 410073, China (xkwang@nudt.edu.cn)}
}

\maketitle

\begin{abstract}
This paper investigates the formation of fixed-wing UAVs in 3D, which communicate via a directed acyclic graph. Different from common formation problems, we consider the roto-translation invariant (RTI) formation, where the ``roto-translation" refers to a rigid-body motion obtained by composing rotation and translation. Besides, the fixed-wing UAV is modelled by a 3-D nonholonomic constrained rigid body instead of a particle agent. The main results of this paper include proposing the formation feasibility conditions and designing the formation controller. Firstly, we define the RTI formation and propose the conditions to guarantee that the formations are feasible for the fixed-wing UAVs under the nonholonomic and input saturation constraints. Secondly, given feasible formations, we design a formation controller by introducing a virtual leader and employing the compensation of rotation, followed by proving the stability of the closed-loop system. Finally, simulation examples are presented to verify the theoretical results.al with the motion coupling among multiple robots. Finally, numerical simulations are conducted to verify the theoretical results.
\end{abstract}

\begin{IEEEkeywords}
Formation control; Roto-translation; Fixed-wing UAV; Nonholonomic constraints; Rigid body.
\end{IEEEkeywords}

%
\IEEEpeerreviewmaketitle

\section{Introduction}

Fixed-wing UAVs are playing increasingly prominent roles in defense programs and civil applications, such as border patrol, disaster relief, and environmental monitoring \cite{Beard2012Small,Beard2006Decentralized,Sujit2014Unmanned,Wang2019Coordinated,Liu2022Mission}. Among various coordination tasks, the most common one for fixed-wing UAVs is the formation control \cite{Oh2015A_survey}, which aims at driving multiple UAVs to achieve the predefined constraints regarding their states, generally interpreted as the desired geometric formation patterns. Plenty of existing articles have investigated the formation control of fixed-wing UAVs with diverse approaches, such as leader-follower structure \cite{Wang2021Distributed,Zhang2021Robust}, consensus-based method \cite{Muslimov2021Consensus,Guzey2017Hybrid}, guiding vector field \cite{Wang2022A_Fixed-Wing,Baldi2022Adaptation}, graph rigidity \cite{Sun2019Circular,Bayezit2013Distributed}. Additionally, several papers further focus on the formation reconfiguration \cite{Chen2021Formation,Wang2021Formation} and maneuverability \cite{Challa2021On_Maneuverability,Wang2020Motion}. Although having been extensively studied, the formation of fixed-wing UAVs still remains some implicit but attractive facts which are typically omitted.

One fact is that the model of the fixed-wing UAV is a rigid body in 3D, indicating that the complete 3-D rigid body kinematics should be established to capture in mathematics the important characteristics of the physical system. However, in a large amount of research \cite{Wang2021Distributed,Muslimov2021Consensus,Sun2019Circular,Chen2021Formation,Wang2020Motion,Sun2021Collaborative,Chen2020On_Dubins}, the fixed-wing UAV is modelled by a planar nonholonomic vehicle, generally known as a unicycle. Although the model of a 2-D nonholonomic vehicle coincides with the cases of flying at a fixed altitude, the results obtained in 2D cannot be directly applied to the formation problems of the fixed-wing UAVs in 3D.

Another fact in formation control is that it is more appealing in certain real-world scenarios to focus on the overall motion of a formation pattern. It has been demonstrated in \cite{Das2002Vision-Based} that the concept of formation control should involve not only controlling the relative positions and orientations of the agents in a group, but also allowing the group to move as a whole. Nevertheless, only a few papers \cite{Li2016Notion,Peng2020Mobile,Chen2022Three-Dimensional} investigate the formation problem from the perspective of overall motion. Although providing some insightful results, these articles omitted the formation feasibility, which is indeed the premise of control especially under certain motion constraints.

Generally, it is said that a formation pattern is feasible, if all of the agents in the formation satisfy the motion constraints, such as nonholonomic constraints, input saturation constraints, and so on. The seminal paper \cite{Tabuada2005Motion} develops a systematic framework for studying the motion feasibility of multi-agent formations, where feasibility conditions are proposed to maintain formation specifications described by equality constraints. The work by Sun and Anderson \cite{Sun2016Formation} investigates the formation feasibility of heterogeneous robot teams modelled by control affine nonlinear systems, particularly presenting
two special motions that preserve motion constraints. The paper \cite{Morbidi2018A_New} characterizes the mobility for formations of unicycle robots, where the feasibility is formulated based on the distance-bearing constraints. Colombo and Dimarogonas \cite{Colombo2020Motion} have studied the motion feasibility problem for multi-agent systems on Lie groups, especially for the left-invariant fully actuated kinematic and dynamic systems. Besides, the dissertation \cite{Whitzer2020Coordinating} addresses the multi-robot planning problem with motion constraints, and develops an algorithm for fixed-wing UAVs in formation flight.

Motivated by the above-mentioned literature, in this paper, we investigate the roto-translation invariant (RTI) formation feasibility and control of fixed-wing UAVs in 3D under the communication topology of a directed acyclic graph. The terminology of ``roto-translation invariant" is proposed in \cite{Consolini2012On}, where roto-translation describes a rigid-body motion obtained by composing rotation and translation. Thus, the RTI formation can rotate and translate simultaneously, and meanwhile the shape of the formation pattern keeps invariant, moving like a single rigid body. The contributions of this paper are threefold.

Firstly, we establish the kinematics of the fixed-wing UAV, which is a general 3-D rigid body under the nonholonomic constraints, rather than a 3-D nonholonomic particle agent with decoupled rotation \cite{Yao2020Path,Zhao2018Affine}. Different from the 2-D rotation which has only one degree of freedom (DOF), the rotation kinematics in 3D is not an integrator anymore but has a nonlinear formulation with the coupling of three DOFs. This leads to the rigid body's attitude being related to the rotation sequence, which brings essential challenges than the 2-D case as given in \cite{He2021Roto}. Particularly, in order to globally and uniquely depict the rotation, the fixed-wing UAV's kinematics is established in the Lie group ${\rm SE(3)}$ \cite{Verginis2019Robust,Thunberg2016Consensus,Wang2022Hybrid,Liu2021Geometry,Wang2012A_Dual}, avoiding the singularity and unwinding phenomenon resulting from Euler angles.

Secondly, we define the RTI formation of the fixed-wing UAVs, and propose the formation feasibility under the nonholonomic and input saturation constraints. Since the RTI formation possesses special motion characteristics, not all formation patterns are feasible to keep RTI. Thus, we propose the formation feasibility by deriving the condition of formation maintenance, and embedding the nonholonomic and saturation constraints into it. It is shown that the relative positions in the formation can be arbitrarily specified, while the relative orientations are decided by the nonholonomic constraints. Moreover, it is derived mathematically that when the RTI formation turns around, the outside fixed-wing UAVs accelerate while the inside ones decelerate.

Thirdly, based on the feasible formations, we design the formation controller for the fixed-wing UAVs. Note that the fixed-wing UAV is an underactuated system in the sense that six DOFs are controlled by only four inputs. Hence, problem, we present the idea of feedback coupling for underactuated systems. Specifically, the underactuated states are made coupled with the actuated directions by designing appropriate feedback in the controller, so that the lack of control inputs can be compensated by the additional constructed terms. Moreover, these additional feedback are designed from a physical perspective, representing the rotation from the constrained velocity to the standard velocity. Then, intuitively, the fixed-wing UAVs move along the trajectories of fully actuated systems to achieve the formation.

The paper is organized as follows. Section~\ref{sec_prelimi} provides the preliminaries and the problem statement, followed by the communication topology transformation in Section~\ref{sec_topolo}. Section~\ref{sec_RTI_feasi} and Section~\ref{sec_control} propose the formation feasibility and the formation controller, respectively. Numerical simulation examples are given in Section~\ref{sec_simulation}, and the paper concludes in Section~\ref{sec_conclusion}.

\section{Preliminaries and problem statement} \label{sec_prelimi}

\subsection{Kinematics of fixed-wing UAV}

The fixed-wing UAV is modelled by a 6-DOF rigid body moving in $\mathbb{R}^3$. Let $\bm{\mathcal{F}}_{\rm e}$ denote the earth-fixed frame and $\bm{\mathcal{F}}_{\rm b}$ the body-fixed frame. The position is described by a vector $\bm{p}=[x\ \ y\ \ z]^{T}$, while the attitude is specified by a rotation matrix $\bm{R}$ in the Special Orthogonal group ${\rm SO(3)}=\{\bm{R}\in\mathbb{R}^{3\times 3}{\big |}\bm{R}^{T}\bm{R}=\bm{I}_3, {\rm det}\bm{R}=1\}$. Thus, the configuration of the fixed-wing UAV can be given by
\begin{equation}\label{eq_g_def}
\bm{g}=\begin{bmatrix} \bm{R} & \bm{p} \\ \bm{0}_{1\times 3} & 1 \end{bmatrix}\in\mathbb{R}^{4\times 4}.
\end{equation}
Since $\bm{g}$ is uniquely defined by $\bm{R}$ and $\bm{p}$, it can also be written as $\bm{g}=(\bm{R},\bm{p})$. All the configurations $\bm{g}$ constitute a Lie group named the Special Euclidean group
\begin{equation*}
  {\rm SE(3)}=\left\{\begin{bmatrix} \bm{R} & \bm{p} \\ \bm{0}_{1\times 3} & 1 \end{bmatrix}\in\mathbb{R}^{4\times 4}{\bigg |}\ \bm{R}\in{\rm SO(3)},\ \bm{p}\in\mathbb{R}^3\right\}.
\end{equation*}
Therefore, the configuration $\bm{g}$ of the fixed-wing UAV is an element in the Lie group $\rm{SE(3)}$.

The fixed-wing UAV's angular velocity and linear velocity are denoted by $\bm{\omega}=[\omega^x\ \ \omega^y\ \ \omega^z]^{T}\in\mathbb{R}^3$ and $\bm{v}=[v^x\ \ v^y\ \ v^z]^{T}\in\mathbb{R}^3$, which are both defined in $\bm{\mathcal{F}}_{\rm b}$. Regarding SO(3), the associated Lie algebra is $\mathfrak{so}(3)=\{\bm{S}\in\mathbb{R}^{3\times 3}{\big |}\ \bm{S}^{T} = -\bm{S}\}$. Define a map $\cdot^{\wedge}:\mathbb{R}^3 \to \mathfrak{so}(3)$ by $(\bm{a}^{\wedge})\bm{b}=\bm{a}\times \bm{b}$ for all $\bm{a},\bm{b}\in\mathbb{R}^3$, where ``$\times$" is the vector cross-product, and let $\cdot^{\vee}: \mathfrak{so}(3) \to \mathbb{R}^3$ denote the inverse isomorphism. Then, the associated Lie algebra of $\rm{SE(3)}$ is
\begin{equation*}
  \mathfrak{se}(3)=\left\{\begin{bmatrix}
                       \bm{S} & \bm{v} \\
                       \bm{0}_{1\times 3} & 0
                     \end{bmatrix}\in\mathbb{R}^{4\times 4}{\bigg |}\ \bm{S}\in\mathfrak{so}(3), \bm{v}\in\mathbb{R}^3 \right\}.
\end{equation*}
With slight abuse of notation, we introduce the linear map $\cdot^{\wedge}:\mathbb{R}^3\oplus\mathbb{R}^3 \to \mathfrak{se}(3)$ defined by
\begin{equation}\label{eq_xi_def}
  \bm{\xi}^{\wedge}=\begin{bmatrix}
              \bm{\omega}^{\wedge} & \bm{v} \\
              \bm{0}_{1\times 3} & 0
            \end{bmatrix}\in\mathbb{R}^{4\times 4},
\end{equation}
for $\bm{\xi}=(\bm{\omega},\bm{v})\in\mathbb{R}^3\oplus\mathbb{R}^3$. Similarly, let $\cdot^{\vee}: \mathfrak{se}(3) \to \mathbb{R}^3\oplus\mathbb{R}^3$ represent the inverse isomorphism. The elements in $\mathfrak{se}(3)$ are referred to as twists. Thus, the velocity $\bm{\omega}\in\mathbb{R}^3$ and $\bm{v}\in\mathbb{R}^3$ are formulated to be the twist $\bm{\xi}^{\wedge}\in\mathfrak{se}(3)$.

Having defined the configuration $\bm{g}$ and velocity $\bm{\xi}^{\wedge}$, we can establish the kinematics of the fixed-wing UAV by
\begin{equation}\label{eq_kine}
  \dot{\bm{g}}=\bm{g}\bm{\xi}^{\wedge},
\end{equation}
where $\bm{g}$ is the state and $\bm{\xi}^{\wedge}$ is the control input. It is observed that the system (\ref{eq_kine}) is a global description independent of local coordinates. In addition to the kinematics, the adjoint map ${\rm Ad}_{\bm{g}}:\mathfrak{se}(3)\to\mathfrak{se}(3)$ will be used in later derivation, which is defined by
\begin{equation}\label{eq_Ad_def}
  {\rm Ad}_{\bm{g}}\bm{\eta}^{\wedge}=\bm{g}(\bm{\eta}^{\wedge})\bm{g}^{-1},\quad \forall\bm{\eta}\in\mathfrak{se}(3).
\end{equation}

Although having six DOFs, the fixed-wing UAV is restricted by the nonholonomic constraints, where the word ``nonholonomic" refers to the nonintegrability of a differential equation \cite{Bloch2003Nonholonomic}. Intuitively, nonholonomic constraints of a fixed-wing UAV exhibit as zero linear velocities along the $y-,z-$axis of $\bm{\mathcal{F}}_{\rm b}$, i.e., $v^y=v^z=0$, which can be expressed in a coordinate-free formulation as
\begin{equation}\label{eq_e_xi_e_0}
  \begin{bmatrix}
    \bm{e}_2 & \bm{e}_3
  \end{bmatrix}^T(\bm{\xi}^{\wedge})\bm{e}_4=\bm{0}_{2\times 1},
\end{equation}
where $\bm{e}_i\in\mathbb{R}^4$ $(i=2,3,4)$ are the unit vectors with the $i$-th entry as $1$. Hence, the nonholonomic constraints render the fixed-wing UAV to be an underactuated system in the sense that the six DOFs of the UAV are controlled only by four available control inputs, i.e., $v^x$, $\omega^x$, $\omega^y$, $\omega^z$.

Thus, the kinematics of the fixed-wing UAV is described by (\ref{eq_kine}) together with the nonholonomic constraints (\ref{eq_e_xi_e_0}). Although it seems abstract, the kinematics can be transformed to a common formulation by using the Euler angles. Let $\phi,\theta,\psi$ denote the roll, pitch, yaw angles, respectively, and we employ the rotation sequence of $\phi-\theta-\psi$. Then, the rotation matrix $\bm{R}$ can be parameterized to be
\begin{equation}\label{eq_R_angles}
  \bm{R}=\begin{bmatrix}
           {\rm c}_{\psi}{\rm c}_{\theta} & -{\rm s}_{\psi}{\rm c}_{\phi}+{\rm c}_{\psi}{\rm s}_{\theta}{\rm s}_{\phi} & {\rm s}_{\psi}{\rm s}_{\phi}+{\rm c}_{\psi}{\rm s}_{\theta}{\rm c}_{\phi} \\
           {\rm s}_{\psi}{\rm c}_{\theta} & {\rm c}_{\psi}{\rm c}_{\phi}+{\rm s}_{\psi}{\rm s}_{\theta}{\rm s}_{\phi} & -{\rm c}_{\psi}{\rm s}_{\phi}+{\rm s}_{\psi}{\rm s}_{\theta}{\rm c}_{\phi} \\
           -{\rm s}_{\theta} & {\rm c}_{\theta}{\rm s}_{\phi} & {\rm c}_{\theta}{\rm c}_{\phi}
         \end{bmatrix},
\end{equation}
where ${\rm c}_{\star}\triangleq\cos\star$ and ${\rm s}_{\star}\triangleq\sin\star$. We take the time derivative of $\bm{R}$, and the rotation kinematics in the coordinates of the Euler angles can be obtained as
\begin{equation}\label{eq_rot_kin_local}
  \begin{bmatrix}
    \dot{\phi} \\
    \dot{\theta} \\
    \dot{\psi}
  \end{bmatrix}=
  \begin{bmatrix}
    1 & \tan\theta\sin\phi & \tan\theta\cos\phi \\
    0 & \cos\phi & -\sin\phi \\
    0 & \sec\theta\sin\phi & \sec\theta\cos\phi
  \end{bmatrix}
  \begin{bmatrix}
    \omega^x \\
    \omega^y \\
    \omega^z
  \end{bmatrix},
\end{equation}
implying that the rotation kinematics in 3D is not an integrator like in 2D anymore \cite{Beard2012Small}. Regarding the translation kinematics, it follows from (\ref{eq_kine}) that the time derivative of $\bm{p}$ is $\dot{\bm{p}}=\bm{R}\bm{v}$, and by substituting (\ref{eq_R_angles}) into it, we have
\begin{equation}\label{eq_tran_kin_local}
  \begin{bmatrix}
    \dot{x} \\
    \dot{y} \\
    \dot{z}
  \end{bmatrix}=v^x
  \begin{bmatrix}
    \cos\psi\cos\theta \\
    \sin\psi\cos\theta \\
    -\sin\theta
  \end{bmatrix},
\end{equation}
where $v^y=v^z=0$ is utilized. Therefore, the globally-described kinematics (\ref{eq_kine}) can be locally parameterized as the rotation kinematics (\ref{eq_rot_kin_local}) and translation kinematics (\ref{eq_tran_kin_local}). Although providing an intuitive way to represent the attitude, the Euler angles suffer from singularity. Hence, we establish the model in the Lie group ${\rm SE(3)}$ without local coordinates, and employ the kinematics (\ref{eq_kine}) with the nonholonomic constraints (\ref{eq_e_xi_e_0}) to describe the fixed-wing UAV.

\subsection{Basic Graph Theory}

A group of fixed-wing UAVs interact with each other via communication networks, and it is convenient to model the information exchanges by directed graphs. A directed graph $\mathcal{G}$ is a pair $(\mathcal{V},\mathcal{E})$, where $\mathcal{V}=\{\nu_1,\cdots,\nu_N\}$ is a nonempty finite node set and $\mathcal{E}\subseteq\mathcal{V}\times\mathcal{V}$ is an edge set of ordered pairs of nodes, called edges. The edge $(\nu_i,\nu_j)$ in the edge set $\mathcal{E}$ denotes that the node $\nu_j$ can obtain information from the node $\nu_i$, but not necessarily vice versa. For an edge $(\nu_i,\nu_j)$, the node $\nu_i$ is called the parent node, and $\nu_j$ is the child node.

A directed path from node $\nu_{i_1}$ to node $\nu_{i_l}$ is a sequence of ordered edges of the form $(\nu_{i_k}, \nu_{i_{k+1}})$, $k=1,\cdots,l-1$. A cycle is a directed path that starts and ends at the same node. A directed acyclic graph is a directed graph without any cycle. A directed spanning tree is a directed graph in which every node has exactly one parent except for one node, called the root node, which has no parent and has directed paths to all other nodes.

\subsection{RTI formation in 3D}\label{sec_RTI}

Consider $N+1$ fixed-wing UAVs communicated by a directed acyclic graph. Let $\bm{g}_i=(\bm{R}_i,\bm{p}_i)\in{\rm SE(3)}$ and $\bm{\xi}_i^{\wedge}=(\bm{\omega}_i^{\wedge},\bm{v}_i)\in\mathfrak{se}(3)$ denote the configuration and velocity of the fixed-wing UAVs $(i=0,1,\cdots,N)$, where $0$ represents the leader (or the root node of the graph), while $1,\cdots,N$ represent the followers.

\begin{definition}[RTI formation]\label{def_RTI}
  Given a group element $\bar{\bm{g}}_{0i}=(\bar{\bm{R}}_{0i},\bar{\bm{p}}_{0i})\in{\rm SE(3)}$ $(i=1,\cdots,N)$, if the follower $i$'s configuration $\bm{g}_i$ satisfies
  \begin{equation}\label{eq_RTI_def}
    \bm{g}_i=\bm{g}_0\bar{\bm{g}}_{0i},
  \end{equation}
  where $\bm{g}_0$ is the leader's configuration, then it is called that the follower $\bm{g}_i$ achieves the desired formation pattern specified by $\bar{\bm{g}}_{0i}$ with respect to the leader $\bm{g}_0$. Furthermore, if the desired rotation matrix $\bar{\bm{R}}_{0i}$ and position vector $\bar{\bm{p}}_{0i}$ are both constants, in other words, $\bar{\bm{g}}_{0i}$ is a constant group element in ${\rm SE(3)}$, then $\bar{\bm{g}}_{0i}$ is referred to as the roto-translation invariant (RTI) formation. If only the desired position vector $\bar{\bm{p}}_{0i}$ is a constant, then $\bar{\bm{g}}_{0i}$ is referred to as the pseudo roto-translation invariant (P-RTI) formation.
\end{definition}

\begin{remark}
  The formation pattern $\bar{\bm{g}}_{0i}$ describes a relative configuration of $\bm{g}_i$ with respect to $\bm{g}_0$, rather than an absolute configuration. The RTI formation requires that the relative position $\bar{\bm{p}}_{0i}$ should be unchanged in the leader's body-fixed frame, and the relative attitude $\bar{\bm{R}}_{0i}$ with respect to the leader should also keep fixed. Regarding the P-RTI formation, only fixed relative position is required, while no formation constraint is introduced to the relative attitude. However, the relative attitude in the P-RTI formation cannot be set arbitrarily. Instead, it is typically decided by the nonholonomic constraints.
\end{remark}

\emph{Problem Statement}: Regarding $N+1$ fixed-wing UAVs connected by a directed acyclic graph, the investigation of the formation problem involves two aspects as given below.
\begin{enumerate}
  \item Propose the RTI formation feasibility and determine what types of formation patterns can be realized by the fixed-wing UAVs under the nonholonomic and input saturation constraints;
  \item Design the formation controller so as to achieve feasible RTI formation patterns for the fixed-wing UAVs.
\end{enumerate}

\section{Communication topology transformation}\label{sec_topolo}

In this section, we propose a transformation for the communication topology, which is a directed acyclic graph. As shown in Definition~\ref{def_RTI}, the formation pattern $\bar{\bm{g}}_{0i}$ is given with respect to the leader $\bm{g}_0$. Nevertheless, in the swarm of the fixed-wing UAVs, an arbitrary follower $\bm{g}_i$ possibly does not interact with the leader $\bm{g}_0$ directly, while $\bm{g}_i$ always has parent nodes $\bm{g}_{P_i}$, i.e., the nodes from which $\bm{g}_i$ receives information in the communication graph. Thus, the formation of $\bm{g}_i$ should be given with respect to its parent nodes $\bm{g}_{P_i}$. For a directed spanning tree, each follower $\bm{g}_i$ has one and only one parent node $\bm{g}_{P_i}$. Then, we can similarly define $\bar{\bm{g}}_{P_ii}$, that is, the formation pattern of $\bm{g}_i$ with respect to $\bm{g}_{P_i}$, by
\begin{equation}\label{eq_def_form_Pi}
  \bm{g}_i=\bm{g}_{P_i}\bar{\bm{g}}_{P_ii}.
\end{equation}
However, regarding a directed acyclic graph, which is considered in this paper, each follower $\bm{g}_i$ probably has more than one parent node. Then, the main challenge is how to define the formation with respect to multiple parent nodes. To solve this problem, we utilize the geometric convex combination in the Lie group ${\rm SE}(3)$ to construct a virtual parent node for each follower.

\begin{definition}[\cite{Peng2019Geometric}, geometric convex combination]\label{def_convex}
  Let $\bm{g}_{P_i}^1,\cdots,\bm{g}_{P_i}^{M_i}$ denote the configurations of the parent nodes of the follower $\bm{g}_i$, where $M_i$ is the number of the parent nodes of $\bm{g}_i$. Then, the geometric convex combination of $\bm{g}_{P_i}^1,\cdots,\bm{g}_{P_i}^{M_i}$, denoted by $\bm{g}_{G_i}$, is iteratively defined by
  \begin{equation}\label{eq_g_convex}
    \begin{aligned}
      \bm{g}_{P_i}^{1,2} &= \bm{g}_{P_i}^1\exp(\lambda_i^1\log((\bm{g}_{P_i}^1)^{-1}\bm{g}_{P_i}^2)), \\
      \bm{g}_{P_i}^{1,2,3} &= \bm{g}_{P_i}^{1,2}\exp(\lambda_i^2\log((\bm{g}_{P_i}^{1,2})^{-1}\bm{g}_{P_i}^3)), \\
      &\cdots \\
      \bm{g}_{G_i} &= \bm{g}_{P_i}^{1,\cdots,M_i-1}\exp(\lambda_i^{M_i-1}\log((\bm{g}_{P_i}^{1,\cdots,M_i-1})^{-1}\bm{g}_{P_i}^{M_i})),
    \end{aligned}
  \end{equation}
  where $\lambda_i^j$ are the convex combination coefficients satisfying $0\le\lambda_i^j\le 1$, $j=1,\cdots,M_i-1$, and the exponential map $\exp:\mathfrak{se}(3)\to{\rm SE}(3)$ and the logarithmic map $\log:{\rm SE}(3)\to\mathfrak{se}(3)$ are given in \cite{Bullo2005Geometric}.
\end{definition}

Regarding an arbitrary follower $\bm{g}_i$, Definition~\ref{def_convex} provides a virtual parent node $\bm{g}_{G_i}$, which is defined by the geometric convex combination of the parent nodes of $\bm{g}_i$. Since the graph is acyclic, the geometric convex combination $\bm{g}_{G_i}$ does not involve the information of $\bm{g}_i$, so that there is no algebraic loop regarding $\bm{g}_i$. Similar to every node in the graph, the geometric convex combination $\bm{g}_{G_i}$ also possesses its own kinematics, which is given in the following lemma.

\begin{lemma}\label{lem_kine_convex}
  Let $\bm{\xi}_{P_i}^1,\cdots,\bm{\xi}_{P_i}^{M_i}$ denote the velocities of the parent nodes of $\bm{g}_i$. Then, the kinematics of the geometric convex combination $\bm{g}_{G_i}$ is given by
  \begin{equation}\label{eq_kin_g_G}
    \dot{\bm{g}}_{G_i}=\bm{g}_{G_i}\bm{\xi}_{G_i}^{\wedge},
  \end{equation}
  where $\bm{\xi}_{G_i}$ is the convex combination of the velocities and is defined by
  \begin{equation}\label{eq_xi_convex}
    \begin{aligned}
      \bm{\xi}_{P_i}^{1,2} &= (1-\lambda_i^1)\bm{\xi}_{P_i}^1+\lambda_i^1\bm{\xi}_{P_i}^2, \\
      \bm{\xi}_{P_i}^{1,2,3} &= (1-\lambda_i^2)\bm{\xi}_{P_i}^{1,2}+\lambda_i^2\bm{\xi}_{P_i}^3, \\
      &\cdots \\
      \bm{\xi}_{G_i} &= (1-\lambda_i^{M_i-1})\bm{\xi}_{P_i}^{1,\cdots,M_i-1}+\lambda_i^{M_i-1}\bm{\xi}_{P_i}^{M_i}.
    \end{aligned}
  \end{equation}
  Furthermore, if the velocities $\bm{\xi}_{P_i}^1,\cdots,\bm{\xi}_{P_i}^{M_i}$ are all restricted by the nonholonomic constraints (\ref{eq_e_xi_e_0}), then the velocity convex combination $\bm{\xi}_{G_i}$ defined in (\ref{eq_xi_convex}) satisfies the same nonholonomic constraints
  \begin{equation}\label{eq_xi_convex_nonholo_constrain}
    \begin{bmatrix}
      \bm{e}_2 & \bm{e}_3
    \end{bmatrix}^T(\bm{\xi}_{G_i}^{\wedge})\bm{e}_4=\bm{0}_{2\times 1}.
  \end{equation}
\end{lemma}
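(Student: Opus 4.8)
The plan is to prove the lemma by induction along the iterative construction of Definition~\ref{def_convex}, so that the whole statement reduces to a single elementary step: given two curves $\bm{g}_a(t),\bm{g}_b(t)\in{\rm SE}(3)$ with $\dot{\bm{g}}_a=\bm{g}_a\bm{\xi}_a^{\wedge}$ and $\dot{\bm{g}}_b=\bm{g}_b\bm{\xi}_b^{\wedge}$, and $\lambda\in[0,1]$, set $\bm{h}=\bm{g}_a\exp\!\big(\lambda\log(\bm{g}_a^{-1}\bm{g}_b)\big)$ and show that $\dot{\bm{h}}=\bm{h}\,\big((1-\lambda)\bm{\xi}_a+\lambda\bm{\xi}_b\big)^{\wedge}$. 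Granting this, one applies it successively to the pairs $(\bm{g}_{P_i}^{1},\bm{g}_{P_i}^{2})$, $(\bm{g}_{P_i}^{1,2},\bm{g}_{P_i}^{3})$, $\ldots$, $(\bm{g}_{P_i}^{1,\cdots,M_i-1},\bm{g}_{P_i}^{M_i})$ in (\ref{eq_g_convex}), and the resulting twists chain exactly as in (\ref{eq_xi_convex}), giving (\ref{eq_kin_g_G}). Note that the form $\dot{\bm{g}}_{G_i}=\bm{g}_{G_i}\bm{\xi}_{G_i}^{\wedge}$ with $\bm{\xi}_{G_i}^{\wedge}\in\mathfrak{se}(3)$ is automatic for any smooth curve in ${\rm SE}(3)$; the content is that this twist is the affine combination (\ref{eq_xi_convex}).

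For the elementary step I would introduce $\bm{r}(t)=\bm{g}_a^{-1}(t)\bm{g}_b(t)$ and $\bm{\Omega}(t)=\log\bm{r}(t)\in\mathfrak{se}(3)$, so $\bm{h}=\bm{g}_a\exp(\lambda\bm{\Omega})$. Differentiating $\bm{r}$ and using the two kinematics gives $\dot{\bm{r}}=-\bm{\xi}_a^{\wedge}\bm{r}+\bm{r}\bm{\xi}_b^{\wedge}$, hence $\bm{r}^{-1}\dot{\bm{r}}=\bm{\xi}_b^{\wedge}-{\rm Ad}_{\bm{r}^{-1}}\bm{\xi}_a^{\wedge}$, where ${\rm Ad}$ is as in (\ref{eq_Ad_def}). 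Next I would use the left-trivialized derivative of the exponential, $\bm{r}^{-1}\dot{\bm{r}}=\big(\sum_{k\ge0}\tfrac{(-1)^k}{(k+1)!}\mathrm{ad}_{\bm{\Omega}}^{\,k}\big)(\dot{\bm{\Omega}})$, to solve for $\dot{\bm{\Omega}}$, then differentiate $\bm{h}=\bm{g}_a\exp(\lambda\bm{\Omega})$ by the product rule, again invoking the exponential-derivative formula at $\lambda\bm{\Omega}$ together with $\dot{\bm{g}}_a=\bm{g}_a\bm{\xi}_a^{\wedge}$. Finally I would simplify the resulting operator expression using ${\rm Ad}_{\exp(\bm{X})}=e^{\mathrm{ad}_{\bm{X}}}$, the goal being to see that the $\bm{\Omega}$-dependent operators cancel and leave $\bm{h}^{-1}\dot{\bm{h}}=(1-\lambda)\bm{\xi}_a^{\wedge}+\lambda\bm{\xi}_b^{\wedge}$.

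The nonholonomic claim (\ref{eq_xi_convex_nonholo_constrain}) is then immediate from linearity: the map $\bm{\xi}\mapsto\big[\bm{e}_2\ \ \bm{e}_3\big]^{T}\bm{\xi}^{\wedge}\bm{e}_4=[v^y\ \ v^z]^{T}$ is linear in $\bm{\xi}$, so a two-term affine combination $(1-\lambda)\bm{\xi}'+\lambda\bm{\xi}''$ of twists satisfying (\ref{eq_e_xi_e_0}) again satisfies (\ref{eq_e_xi_e_0}); applying this along the chain (\ref{eq_xi_convex}) shows in turn that $\bm{\xi}_{P_i}^{1,2},\bm{\xi}_{P_i}^{1,2,3},\ldots,\bm{\xi}_{G_i}$ all obey the constraint. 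No geometry beyond linearity of the constraint map is needed here.

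The main obstacle is the elementary step. Because ${\rm SE}(3)$ is curved, ``the body velocity of a geodesic interpolant equals the affine combination of the endpoint body velocities'' is not something one can assert by inspection --- it rests on a cancellation among the $\mathrm{ad}_{\bm{\Omega}}$-series coming from the two exponential derivatives (at $\bm{\Omega}$ and at $\lambda\bm{\Omega}$) and from ${\rm Ad}_{\bm{r}^{-1}}$. Getting this cancellation to close cleanly --- or, should an exact identity require it, pinning down the regime (small or commuting relative configurations) in which it holds --- is where essentially all the work lies; the induction along (\ref{eq_g_convex})--(\ref{eq_xi_convex}) and the nonholonomic part are routine by comparison.
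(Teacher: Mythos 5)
Your handling of the second claim coincides with the paper's proof: the map $\bm{\xi}\mapsto[\bm{e}_2\ \ \bm{e}_3]^{T}(\bm{\xi}^{\wedge})\bm{e}_4$ is linear, so each two-term combination in (\ref{eq_xi_convex}) preserves the constraint (\ref{eq_e_xi_e_0}), and iterating along the chain gives (\ref{eq_xi_convex_nonholo_constrain}); that is exactly the computation in the paper. The gap is in the first and main claim. The paper does not prove (\ref{eq_kin_g_G})--(\ref{eq_xi_convex}) at all: it imports the kinematics of the geometric convex combination wholesale from \cite{Peng2019Geometric} and only writes out the constraint-preservation step. You instead try to derive the two-node case directly, and you correctly flag that everything hinges on a cancellation among the $\mathrm{ad}_{\bm{\Omega}}$ series --- but that cancellation does not occur in general, so your elementary step cannot be closed as an unconditional exact identity.

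Concretely, carrying your own outline through with $z=\mathrm{ad}_{\bm{\Omega}}$, $\bm{\Omega}=\log(\bm{g}_a^{-1}\bm{g}_b)$, ${\rm Ad}_{\exp(-X)}=e^{-\mathrm{ad}_X}$ and the left-trivialized exponential derivative $(1-e^{-\mathrm{ad}_X})/\mathrm{ad}_X$ yields
\begin{equation*}
  \bm{h}^{-1}\dot{\bm{h}}
  =\left(\frac{e^{-\lambda z}-e^{-z}}{1-e^{-z}}\right)\!\bm{\xi}_a^{\wedge}
  +\left(\frac{1-e^{-\lambda z}}{1-e^{-z}}\right)\!\bm{\xi}_b^{\wedge},
\end{equation*}
where the two operator weights (analytic functions of $\mathrm{ad}_{\bm{\Omega}}$ applied to the twists) sum to the identity and pass the sanity checks $\lambda=0,1$, but they equal the scalars $1-\lambda$ and $\lambda$ only when the relevant brackets vanish --- e.g.\ when $\bm{\Omega}$ commutes with $\bm{\xi}_a^{\wedge},\bm{\xi}_b^{\wedge}$, or to first order in $\mathrm{ad}_{\bm{\Omega}}$. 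So a direct proof along your lines would establish (\ref{eq_xi_convex}) only under additional commutativity or smallness hypotheses that the lemma does not state; as written, the central claim of your proposal remains unproven, and the place where you defer the work is precisely where the argument breaks. The paper sidesteps this entirely by citation, so if you want a self-contained proof you either need those extra hypotheses made explicit or a different characterization of the interpolant's velocity than the naive affine combination.
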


\begin{proof}
  It has been proved in \cite{Peng2019Geometric} that the geometric convex combination $\bm{g}_{G_i}$ has the kinematics as (\ref{eq_kin_g_G}). In the following, we prove that the velocity convex combination $\bm{\xi}_{G_i}$ satisfies the nonholonomic constraints. According to (\ref{eq_xi_convex}), it is obtained that
  \begin{align}
    &\begin{bmatrix}
      \bm{e}_2 & \bm{e}_3
    \end{bmatrix}^T(\bm{\xi}_{P_i}^{1,2})^{\wedge}\bm{e}_4 \nonumber \\
    =
    &\begin{bmatrix}
      \bm{e}_2 & \bm{e}_3
    \end{bmatrix}^T((1-\lambda_i^1)(\bm{\xi}_{P_i}^1)^{\wedge}+\lambda_i^1(\bm{\xi}_{P_i}^2)^{\wedge})\bm{e}_4 \nonumber \\
    =
    &(1-\lambda_i^1)
    \begin{bmatrix}
      \bm{e}_2 & \bm{e}_3
    \end{bmatrix}^T((\bm{\xi}_{P_i}^1)^{\wedge})\bm{e}_4 \nonumber \\
    &+
    \lambda_i^1
    \begin{bmatrix}
      \bm{e}_2 & \bm{e}_3
    \end{bmatrix}^T((\bm{\xi}_{P_i}^2)^{\wedge})\bm{e}_4. \label{eq_convex_nonhol_pf}
  \end{align}
  Since $[\bm{e}_2\ \ \bm{e}_3]^T((\bm{\xi}_{P_i}^1)^\wedge)\bm{e}_4=\bm{0}$ and $[\bm{e}_2\ \ \bm{e}_3]^T((\bm{\xi}_{P_i}^2)^\wedge)\bm{e}_4=\bm{0}$, by substituting them into (\ref{eq_convex_nonhol_pf}), it is further derived that
  \begin{equation}
    \begin{bmatrix}
      \bm{e}_2 & \bm{e}_3
    \end{bmatrix}^T(\bm{\xi}_{P_i}^{1,2})^{\wedge}\bm{e}_4=\bm{0}.
  \end{equation}
  Then, we can obtain (\ref{eq_xi_convex_nonholo_constrain}) by iterative computation.
\end{proof}

Therefore, based on Lemma~\ref{lem_kine_convex}, the virtual parent node $\bm{g}_{G_i}$ is described by the kinematics (\ref{eq_kin_g_G}) with the nonholonomic constraints (\ref{eq_xi_convex_nonholo_constrain}), which indicates that the virtual parent node has the same model as any of fixed-wing UAV in the swarm. Furthermore, given the fact that there is no cycle in the communication graph, then for each follower $\bm{g}_i$, the motion of its virtual parent node $\bm{g}_{G_i}$ is independent of $\bm{g}_i$'s own motion. Thus, similar to (\ref{eq_def_form_Pi}), we define the formation pattern of $\bm{g}_i$ with respect to $\bm{g}_{G_i}$, denoted by $\bar{\bm{g}}_{G_ii}$, in the following equality
\begin{equation}\label{eq_def_form_Gi}
  \bm{g}_i=\bm{g}_{G_i}\bar{\bm{g}}_{G_ii}.
\end{equation}
This demonstrates that, with the help of the geometric convex combination in the Lie group ${\rm SE(3)}$, the leader-follower formation of the fixed-wing UAVs in a directed acyclic graph can be handled by the formation of the pair of $\bm{g}_i$ and $\bm{g}_{G_i}$ as given in (\ref{eq_def_form_Gi}). Note that such a case is equivalent to the case of a leader with one follower; for simplicity, we let $\bm{g}_L,\bm{g}_F,\bar{\bm{g}}$ denote $\bm{g}_{G_i},\bm{g}_{i},\bar{\bm{g}}_{G_ii}$, respectively. Then, based on (\ref{eq_def_form_Gi}), the follower $\bm{g}_F$ achieves the formation $\bar{\bm{g}}$ with respect to the leader $\bm{g}_L$, if there holds
\begin{equation}\label{eq_RTI_def_L_F}
  \bm{g}_{F}=\bm{g}_L\bar{\bm{g}},
\end{equation}
which is referred to as the condition of formation achievement. Therefore, in the subsequent sections, we shall study the formation feasibility and design the formation controller based on (\ref{eq_RTI_def_L_F}).

\section{RTI formation feasibility analysis} \label{sec_RTI_feasi}

In this section, we investigate the RTI formation feasibility under two kinds of constraints: the nonholonomic constraints and the input saturation constraints. It is obvious that the condition (\ref{eq_RTI_def_L_F}) provides a requirement for the configurations. But in order to maintain such a formation pattern, the velocities should be restricted as well. Thus, we give the condition of formation maintenance as follows, which is firstly proposed in \cite{He2021Roto}.

\begin{lemma}[\cite{He2021Roto}]\label{lem_maintain}
  The follower $\bm{g}_F$ maintains the desired formation pattern $\bar{\bm{g}}$ with respect to the leader $\bm{g}_L$, if the leader's velocity $\bm{\xi}^{\wedge}_L$ and follower's velocity $\bm{\xi}^{\wedge}_F$ satisfy
  \begin{equation}\label{eq_xi1_Adxi_0}
    \bm{\xi}^{\wedge}_F={\rm Ad}_{\bar{\bm{g}}^{-1}}\bm{\xi}^{\wedge}_L,
  \end{equation}
  where ${\rm Ad}_{\bar{\bm{g}}^{-1}}$ is the adjoint map defined in (\ref{eq_Ad_def}).
\end{lemma}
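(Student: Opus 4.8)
The plan is to differentiate the formation achievement condition \eqref{eq_RTI_def_L_F} with respect to time and show that, under the velocity relation \eqref{eq_xi1_Adxi_0}, the configuration relation $\bm{g}_F = \bm{g}_L\bar{\bm{g}}$ is an invariant of the closed-loop flow; equivalently, if it holds at one instant it holds for all time, which is exactly what ``maintains the formation pattern'' means. First I would write down the kinematics of both agents, $\dot{\bm{g}}_L = \bm{g}_L\bm{\xi}_L^{\wedge}$ and $\dot{\bm{g}}_F = \bm{g}_F\bm{\xi}_F^{\wedge}$, from \eqref{eq_kine}. Since $\bar{\bm{g}}$ is a constant group element (this is the RTI case; the P-RTI variant would need a separate argument), I can differentiate the product $\bm{g}_L\bar{\bm{g}}$ directly: $\frac{d}{dt}(\bm{g}_L\bar{\bm{g}}) = \dot{\bm{g}}_L\bar{\bm{g}} = \bm{g}_L\bm{\xi}_L^{\wedge}\bar{\bm{g}}$.

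The key algebraic step is to recognize this as a right-invariant-looking expression and convert it to the standard left-trivialized form. Inserting $\bar{\bm{g}}^{-1}\bar{\bm{g}} = \bm{I}$ gives $\bm{g}_L\bm{\xi}_L^{\wedge}\bar{\bm{g}} = (\bm{g}_L\bar{\bm{g}})(\bar{\bm{g}}^{-1}\bm{\xi}_L^{\wedge}\bar{\bm{g}}) = (\bm{g}_L\bar{\bm{g}})\big({\rm Ad}_{\bar{\bm{g}}^{-1}}\bm{\xi}_L^{\wedge}\big)$, using the definition of the adjoint map in \eqref{eq_Ad_def}. Now I would argue the invariance: let $\bm{h}(t) \triangleq \bm{g}_L(t)\bar{\bm{g}}$ and suppose $\bm{g}_F(0) = \bm{h}(0)$. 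The curve $\bm{h}$ satisfies $\dot{\bm{h}} = \bm{h}\big({\rm Ad}_{\bar{\bm{g}}^{-1}}\bm{\xi}_L^{\wedge}\big)$, while $\bm{g}_F$ satisfies $\dot{\bm{g}}_F = \bm{g}_F\bm{\xi}_F^{\wedge} = \bm{g}_F\big({\rm Ad}_{\bar{\bm{g}}^{-1}}\bm{\xi}_L^{\wedge}\big)$ by hypothesis \eqref{eq_xi1_Adxi_0}. These are two solutions of the same (time-varying, right-hand side linear in the state) matrix ODE with the same initial condition, so by uniqueness of solutions $\bm{g}_F(t) = \bm{h}(t) = \bm{g}_L(t)\bar{\bm{g}}$ for all $t$, i.e., \eqref{eq_RTI_def_L_F} is maintained.

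An alternative, and perhaps cleaner, route is to define the relative configuration error $\bm{e}(t) \triangleq \bm{g}_L^{-1}(t)\bm{g}_F(t)$ and show $\dot{\bm{e}} = \bm{0}$ when \eqref{eq_xi1_Adxi_0} holds, using $\frac{d}{dt}(\bm{g}_L^{-1}) = -\bm{g}_L^{-1}\dot{\bm{g}}_L\bm{g}_L^{-1} = -\bm{\xi}_L^{\wedge}\bm{g}_L^{-1}$; then $\dot{\bm{e}} = -\bm{\xi}_L^{\wedge}\bm{e} + \bm{e}\bm{\xi}_F^{\wedge}$, and substituting $\bm{\xi}_F^{\wedge} = \bar{\bm{g}}^{-1}\bm{\xi}_L^{\wedge}\bar{\bm{g}}$ together with $\bm{e} = \bar{\bm{g}}$ (the value consistent with the formation being achieved) yields $\dot{\bm{e}} = -\bm{\xi}_L^{\wedge}\bar{\bm{g}} + \bar{\bm{g}}\bar{\bm{g}}^{-1}\bm{\xi}_L^{\wedge}\bar{\bm{g}} = \bm{0}$, so $\bm{e}$ stays at $\bar{\bm{g}}$.

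I do not expect any genuine obstacle here — the result is essentially the statement that \eqref{eq_xi1_Adxi_0} is the left-trivialized tangent lift of the constraint \eqref{eq_RTI_def_L_F}. The only points requiring a little care are: (i) using constancy of $\bar{\bm{g}}$ explicitly, so the argument as stated is for the RTI and not the P-RTI case; (ii) making sure the adjoint-map manipulation $\bar{\bm{g}}^{-1}\bm{\xi}_L^{\wedge}\bar{\bm{g}} = {\rm Ad}_{\bar{\bm{g}}^{-1}}\bm{\xi}_L^{\wedge}$ matches the convention in \eqref{eq_Ad_def} (it does, with $\bm{g} = \bar{\bm{g}}^{-1}$, $\bm{\eta}^{\wedge} = \bm{\xi}_L^{\wedge}$); and (iii) invoking uniqueness of solutions of the governing ODE on ${\rm SE}(3)$, which is immediate since the vector field is smooth. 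Since the lemma is quoted from \cite{He2021Roto}, I would keep the proof short, essentially just presenting the error-dynamics computation.
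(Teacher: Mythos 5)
Your argument is correct. The paper itself does not reprove this lemma---it is quoted from \cite{He2021Roto} and used as a black box---but your derivation is exactly the standard one that reference relies on: differentiate $\bm{g}_L\bar{\bm{g}}$ using $\dot{\bm{g}}_L=\bm{g}_L\bm{\xi}_L^{\wedge}$ and constancy of $\bar{\bm{g}}$, rewrite $\bm{g}_L\bm{\xi}_L^{\wedge}\bar{\bm{g}}=(\bm{g}_L\bar{\bm{g}})(\bar{\bm{g}}^{-1}\bm{\xi}_L^{\wedge}\bar{\bm{g}})$ to match the convention of \eqref{eq_Ad_def}, and conclude by uniqueness of solutions that the configuration constraint \eqref{eq_RTI_def_L_F} is invariant once it holds initially. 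Your two caveats are also well placed: the implicit hypothesis that the formation is already achieved at the initial time is genuinely needed for ``maintains,'' and the computation as written covers the RTI (constant $\bar{\bm{g}}$) case, which is the one the paper actually invokes in its feasibility analysis; a time-varying $\bar{\bm{g}}$ would add a $\bar{\bm{g}}^{-1}\dot{\bar{\bm{g}}}$ term to \eqref{eq_xi1_Adxi_0}.
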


Therefore, the conditions (\ref{eq_RTI_def_L_F}) and (\ref{eq_xi1_Adxi_0}) guarantee that the RTI formation can be achieved and maintained, which propose the requirements for configurations and velocities, respectively. Regarding the fully actuated systems, which do not impose any velocity constraints, the velocity condition (\ref{eq_xi1_Adxi_0}) can always be satisfied for arbitrary formation pattern $\bar{\bm{g}}$. However, the velocity of the fixed-wing UAVs is restricted by the nonholonomic and input saturation constraints, indicating that the velocity condition (\ref{eq_xi1_Adxi_0}) may not hold for certain types of $\bar{\bm{g}}$. Then, we would like to investigate under what conditions of $\bar{\bm{g}}$, the velocity condition (\ref{eq_xi1_Adxi_0}) can be guaranteed, which is referred to as the formation feasibility as defined below.

\begin{definition}[formation feasibility]\label{def_feasible}
  Let $\Pi_E:\mathfrak{se}(3)\to\mathbb{R}$ and $\Pi_I:\mathfrak{se}(3)\to\mathbb{R}$ represent the nonholonomic constraints and saturation constraints, respectively. Assume that the velocity of the fixed-wing UAVs satisfies  $\Pi_E(\bm{\xi}^\wedge)=0$ and $\Pi_I(\bm{\xi}^\wedge)\le 0$. Then, we call the formation pattern $\bar{\bm{g}}$ is feasible for the fixed-wing UAVs, if there holds
  \begin{align}\label{eq_constraint_equali}
    \Pi_E({\rm Ad}_{\bar{\bm{g}}^{-1}}\bm{\xi}^{\wedge}_L) & =0, \\
    \Pi_I({\rm Ad}_{\bar{\bm{g}}^{-1}}\bm{\xi}^{\wedge}_L) & \le 0. \label{eq_constraint_inequali}
  \end{align}
\end{definition}

Thus, in the following, we study the formation feasibility under these two kinds of constraints, which lays the foundation to the subsequent formation control.

\subsection{Feasibility under nonholonomic constraints}

Basically, Definition~\ref{def_feasible} provides the idea of studying the formation feasibility, that is, embeding the nonholonomic constraints into the velocity condition (\ref{eq_xi1_Adxi_0}), and then figuring out what conditions the formation pattern should satisfy. Before the main results, a lemma of nonholonomic adjoint velocity is given below.

\begin{lemma}\label{le_form_feas}
  Let $\bar{\bm{g}}=(\bar{\bm{R}},\bar{\bm{p}})$ denote the desired formation pattern of the follower $\bm{g}_F$ with respect to the leader $\bm{g}_L$. Regarding the leader's velocity $\bm{\xi}_L^{\wedge}=(\bm{\omega}_L^{\wedge},\bm{v}_L)$, the formation pattern $\bar{\bm{g}}$ is feasible for the follower $\bm{g}_F$ under the nonholonomic constraints, if there holds
  \begin{equation}\label{eq_nonholo_ad_orbit}
    \begin{bmatrix}
      \bm{e}_2 & \bm{e}_3
    \end{bmatrix}^T\bar{\bm{R}}^T(\bm{\omega}_L^{\wedge}\bar{\bm{p}}+\bm{v}_L)=\bm{0}_{2\times 1}.
  \end{equation}
\end{lemma}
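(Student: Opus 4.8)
The plan is to turn the feasibility requirement of Definition~\ref{def_feasible} into a single block-matrix computation. By Lemma~\ref{lem_maintain} the follower velocity that maintains the pattern $\bar{\bm{g}}$ is $\bm{\xi}_F^{\wedge}={\rm Ad}_{\bar{\bm{g}}^{-1}}\bm{\xi}_L^{\wedge}$ (cf.\ \eqref{eq_xi1_Adxi_0}), and the operator $\Pi_E$ in Definition~\ref{def_feasible} encodes the nonholonomic constraint \eqref{eq_e_xi_e_0}; hence $\bar{\bm{g}}$ is feasible under the nonholonomic constraints exactly when $[\bm{e}_2\ \bm{e}_3]^{T}({\rm Ad}_{\bar{\bm{g}}^{-1}}\bm{\xi}_L^{\wedge})\bm{e}_4=\bm{0}_{2\times1}$. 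So the whole proof reduces to evaluating this left-hand side.

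To do that I would substitute the block forms $\bar{\bm{g}}^{-1}=\bigl[\begin{smallmatrix}\bar{\bm{R}}^{T} & -\bar{\bm{R}}^{T}\bar{\bm{p}}\\ \bm{0}_{1\times3} & 1\end{smallmatrix}\bigr]$ and $\bm{\xi}_L^{\wedge}$ from \eqref{eq_xi_def} into the definition ${\rm Ad}_{\bar{\bm{g}}^{-1}}\bm{\xi}_L^{\wedge}=\bar{\bm{g}}^{-1}\bm{\xi}_L^{\wedge}\bar{\bm{g}}$ given by \eqref{eq_Ad_def}, and perform the two $4\times4$ products. This yields ${\rm Ad}_{\bar{\bm{g}}^{-1}}\bm{\xi}_L^{\wedge}=\bigl[\begin{smallmatrix}(\bar{\bm{R}}^{T}\bm{\omega}_L)^{\wedge} & \bar{\bm{R}}^{T}(\bm{\omega}_L^{\wedge}\bar{\bm{p}}+\bm{v}_L)\\ \bm{0}_{1\times3} & 0\end{smallmatrix}\bigr]$, where the identity $\bar{\bm{R}}^{T}\bm{\omega}_L^{\wedge}\bar{\bm{R}}=(\bar{\bm{R}}^{T}\bm{\omega}_L)^{\wedge}$ for $\bar{\bm{R}}\in{\rm SO(3)}$ is used only to recognize the upper-left block as skew-symmetric (which also confirms the product lies in $\mathfrak{se}(3)$, so the nonholonomic operator is applicable). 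Then $({\rm Ad}_{\bar{\bm{g}}^{-1}}\bm{\xi}_L^{\wedge})\bm{e}_4$ is exactly the fourth column, i.e.\ $\bar{\bm{R}}^{T}(\bm{\omega}_L^{\wedge}\bar{\bm{p}}+\bm{v}_L)$ stacked over $0$, and premultiplying by $[\bm{e}_2\ \bm{e}_3]^{T}$ extracts its second and third entries; requiring those to vanish is precisely \eqref{eq_nonholo_ad_orbit}, which closes the argument.

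There is no genuine analytical obstacle here — the core is routine $4\times4$ algebra — so the only points that need care are bookkeeping. One is the ambient dimension of the basis vectors: $\bm{e}_2,\bm{e}_3,\bm{e}_4$ in \eqref{eq_e_xi_e_0} are in $\mathbb{R}^4$, whereas in \eqref{eq_nonholo_ad_orbit} $\bm{e}_2,\bm{e}_3$ must be read as their $\mathbb{R}^3$ truncations acting on $\bar{\bm{R}}^{T}(\bm{\omega}_L^{\wedge}\bar{\bm{p}}+\bm{v}_L)\in\mathbb{R}^3$, which I would state explicitly so the lemma is unambiguous. A second point is the logical direction: the chain above actually shows that \eqref{eq_nonholo_ad_orbit} is equivalent to feasibility under the nonholonomic constraints, not merely sufficient, and since the subsequent feasibility characterization relies on this I would record the equivalence in a remark even though the statement is phrased as an ``if''.
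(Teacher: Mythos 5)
Your proposal is correct and follows essentially the same route as the paper's proof: compute ${\rm Ad}_{\bar{\bm{g}}^{-1}}\bm{\xi}_L^{\wedge}$ by the block $4\times 4$ products, identify its translational block as $\bar{\bm{R}}^{T}(\bm{\omega}_L^{\wedge}\bar{\bm{p}}+\bm{v}_L)$, and impose the nonholonomic constraint \eqref{eq_e_xi_e_0} on it to obtain \eqref{eq_nonholo_ad_orbit}. Your two bookkeeping remarks (the $\mathbb{R}^3$ versus $\mathbb{R}^4$ reading of $\bm{e}_2,\bm{e}_3$, and the fact that the argument actually gives an equivalence) are accurate but do not change the substance.
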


\begin{proof}
   We firstly compute the adjoint velocity ${\rm Ad}_{\bar{\bm{g}}^{-1}}\bm{\xi}^{\wedge}_L$. According to the adjoint map defined in (\ref{eq_Ad_def}), we have
  \begin{align}\label{eq_Ad_gbar_xi0}
  {\rm Ad}_{\bar{\bm{g}}^{-1}}\bm{\xi}^{\wedge}_L &=
  \begin{bmatrix}
    \bar{\bm{R}}^T & -\bar{\bm{R}}^T\bar{\bm{p}} \\
    \bm{0}_{1\times 3} & 1
  \end{bmatrix}
  \begin{bmatrix}
    \bm{\omega}_L^{\wedge} & \bm{v}_L \\
    \bm{0}_{1\times 3} & 0
  \end{bmatrix}
  \begin{bmatrix}
    \bar{\bm{R}} & \bar{\bm{p}} \\
    \bm{0}_{1\times 3} & 1
  \end{bmatrix} \nonumber \\ &=
  \begin{bmatrix}
    \bar{\bm{R}}^T\bm{\omega}_L^{\wedge}\bar{\bm{R}} & \bar{\bm{R}}^T(\bm{\omega}_L^{\wedge}\bar{\bm{p}}+\bm{v}_L) \\
    \bm{0}_{1\times 3} & 1
  \end{bmatrix}.
  \end{align}
  For the sake of illustration, we define
  \begin{align}
    \bm{\omega}_{\rm Ad}^{\wedge} &= \bar{\bm{R}}^T\bm{\omega}_L^{\wedge}\bar{\bm{R}}, \label{eq_Ad_omega} \\
    \bm{v}_{\rm Ad} &= \bar{\bm{R}}^T(\bm{\omega}_L^{\wedge}\bar{\bm{p}}+\bm{v}_L), \label{eq_Ad_v}
  \end{align}
  where $\bm{\omega}_{\rm Ad}^{\wedge}$ and $\bm{v}_{\rm Ad}$ are regarded as the angular velocity and linear velocity of ${\rm Ad}_{\bar{\bm{g}}^{-1}}\bm{\xi}^{\wedge}_L$, respectively. Note that $\bm{\xi}_F^{\wedge}$ is restricted by the nonholonomic constraints (\ref{eq_e_xi_e_0}). Then, according to Definition~\ref{def_feasible}, ${\rm Ad}_{\bar{\bm{g}}^{-1}}\bm{\xi}^{\wedge}_L$ should satisfy such constraints as well. Thus, we embed the nonholonomic constraints (\ref{eq_e_xi_e_0}) into (\ref{eq_Ad_gbar_xi0}), and it follows that
  \begin{equation}
    \begin{bmatrix}
      \bm{e}_2 & \bm{e}_3
    \end{bmatrix}^T({\rm Ad}_{\bar{\bm{g}}^{-1}}\bm{\xi}^{\wedge}_L)\bm{e}_4=\bm{0}_{2\times 1},
  \end{equation}
  which is further simplified to be (\ref{eq_nonholo_ad_orbit}) based on (\ref{eq_Ad_gbar_xi0}).
\end{proof}

It is observed that the constraints (\ref{eq_nonholo_ad_orbit}) involve the leader's velocity $\bm{\xi}_L^{\wedge}=(\bm{\omega}_L^{\wedge},\bm{v}_L)$ and the formation pattern $\bar{\bm{g}}=(\bar{\bm{R}},\bar{\bm{p}})$. Generally, the leader's velocity has been defined in advance and independent of the formation controller. In contrast, the formation pattern $\bar{\bm{g}}$ is specified by users according to the task. Then, the constraints (\ref{eq_nonholo_ad_orbit}) rely on the choice of $(\bar{\bm{R}},\bar{\bm{p}})$. Moreover, the position vector $\bar{\bm{p}}$ in the formation can usually be specified arbitrarily, while the rotation matrix $\bar{\bm{R}}$ cannot. This is because the nonholonomic constraints restrict the direction of the linear velocity, i.e., the orientation of the fixed-wing UAV. Hence, the rotation matrix $\bar{\bm{R}}$ in the formation pattern should possess a particular form to satisfy the nonholonomic constraints, which exhibits as (\ref{eq_nonholo_ad_orbit}). Then, it is implied that we can propose the formation feasibility by investigating what kinds of $\bar{\bm{R}}$ make the constraints (\ref{eq_nonholo_ad_orbit}) hold.

It is not difficult to find out that the rotation matrix $\bar{\bm{R}}$ could be obtained by solving the matrix equation (\ref{eq_nonholo_ad_orbit}), where the variables except for $\bar{\bm{R}}$ are all known. However, it is not trivial to solve such a matrix equation directly. Moreover, when defining an attitude in applications, we typically provide the Euler angles and rotation sequence firstly, owing to the explicit physical interpretations, and then derive the rotation matrix by the transformation in (\ref{eq_R_angles}). Thus, it will be more convenient to define the formation pattern in real-word scenarios, if the formation feasibility under the nonholonomic constraints is characterized from the perspective of Euler angles. Therefore, we reformulate the (\ref{eq_nonholo_ad_orbit}) by using Euler angles, and propose the conditions when such constraints are able to hold.

\begin{figure*}
\normalsize
\setcounter{MYtempeqncnt}{\value{equation}}
\setcounter{equation}{29}
\begin{equation}
\label{eq_dbl_1}
\left(-\sin\bar{\psi}\cos\bar{\phi}+\cos\bar{\psi}\sin\bar{\theta}\sin\bar{\phi}\right)\tau_{\rm Ad}^x +
   \left(\cos\bar{\psi}\cos\bar{\phi}+\sin\bar{\psi}\sin\bar{\theta}\sin\bar{\phi}\right)\tau_{\rm Ad}^y + \cos\bar{\theta}\sin\bar{\phi}\tau_{\rm Ad}^z =0
\end{equation}
\begin{equation}
\label{eq_dbl_2}
  \left(\sin\bar{\psi}\sin\bar{\phi}+\cos\bar{\psi}\sin\bar{\theta}\cos\bar{\phi}\right)\tau_{\rm Ad}^x + \left(-\cos\bar{\psi}\sin\bar{\phi}+\sin\bar{\psi}\sin\bar{\theta}\cos\bar{\phi}\right)\tau_{\rm Ad}^y
  + \cos\bar{\theta}\cos\bar{\phi}\tau_{\rm Ad}^z =0.
\end{equation}
\setcounter{equation}{\value{MYtempeqncnt}}
\hrulefill
\end{figure*}

Let $\bar{\phi},\bar{\theta},\bar{\psi}$ denote the desired roll, pitch, yaw angles of the follower $\bm{g}_F$ with respect to the leader $\bm{g}_L$. Then, the formation feasibility is presented in the following theorem.

\begin{theorem}\label{theo_RTI_general}
  Consider a formation pattern $\bar{\bm{g}}=(\bar{\bm{R}},\bar{\bm{p}})$, which describes the desired relative configuration of the follower $\bm{g}_F$ with respect to the leader $\bm{g}_L$. Let $\bar{\bm{p}}=[\bar{x}\ \ \bar{y}\ \ \bar{z}]^{T}$ be an arbitrary constant vector in $\mathbb{R}^3$, and let $\bar{\bm{R}}$ be decided by a set of Euler angles $(\bar{\phi},\bar{\theta},\bar{\psi})$, where $\bar{\phi}$ is arbitrarily chosen in $(-\pi,\pi)$, $\bar{\theta}$ is given by
  \begin{equation}\label{eq_theta_feas}
    \bar{\theta}=-\arctan\frac{\tau_{\rm Ad}^z}{\cos\bar{\psi}\tau_{\rm Ad}^x+\sin\bar{\psi}\tau_{\rm Ad}^y},
  \end{equation}
  and $\bar{\psi}$ is given by
  \begin{equation}\label{eq_psi_feas}
    \bar{\psi}=\arctan\frac{\tau_{\rm Ad}^y}{\tau_{\rm Ad}^x}.
  \end{equation}
  Herein, $\tau_{\rm Ad}^x,\tau_{\rm Ad}^y,\tau_{\rm Ad}^z$ are the components of $\bm{\omega}_L^{\wedge}\bar{\bm{p}}+\bm{v}_L$. Then, the following statements hold.
  \begin{enumerate}[1)]
    \item The formation pattern $\bar{\bm{g}}$ is feasible for the fixed-wing UAVs under the nonholonomic constraints.
    \item The formation pattern $\bar{\bm{g}}$ is RTI, if the leader's velocity $\bm{\xi}_L^{\wedge} =(\bm{\omega}_L^{\wedge},\bm{v}_L)$ is time-invariant.
    \item The formation pattern $\bar{\bm{g}}$ is P-RTI, if the leader's velocity $\bm{\xi}_L^{\wedge} =(\bm{\omega}_L^{\wedge},\bm{v}_L)$ is time-varying.
  \end{enumerate}
\end{theorem}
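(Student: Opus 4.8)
The plan is to verify the three statements by working directly from Lemma~\ref{le_form_feas}, which reduces feasibility under nonholonomic constraints to the matrix equation (\ref{eq_nonholo_ad_orbit}). Writing $\bm{\tau}_{\rm Ad}=\bm{\omega}_L^{\wedge}\bar{\bm{p}}+\bm{v}_L=[\tau_{\rm Ad}^x\ \tau_{\rm Ad}^y\ \tau_{\rm Ad}^z]^T$, the condition (\ref{eq_nonholo_ad_orbit}) becomes $[\bm{e}_2\ \bm{e}_3]^T\bar{\bm{R}}^T\bm{\tau}_{\rm Ad}=\bm{0}_{2\times 1}$, i.e. the second and third rows of $\bar{\bm{R}}^T\bm{\tau}_{\rm Ad}$ vanish. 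First I would substitute the Euler-angle parametrization (\ref{eq_R_angles}) for $\bar{\bm{R}}$ and write out these two scalar equations explicitly; these are exactly the displayed equations (\ref{eq_dbl_1})--(\ref{eq_dbl_2}) that the paper has pre-positioned in the figure float. The core of the proof of statement 1) is then to show that the choices (\ref{eq_psi_feas}) for $\bar{\psi}$ and (\ref{eq_theta_feas}) for $\bar{\theta}$ solve (\ref{eq_dbl_1})--(\ref{eq_dbl_2}) simultaneously, for any value of $\bar{\phi}\in(-\pi,\pi)$.

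The key algebraic observation I would exploit is that $\bar{\phi}$ enters (\ref{eq_dbl_1})--(\ref{eq_dbl_2}) only through an overall rotation in the $(\text{row }2,\text{row }3)$ plane: equations (\ref{eq_dbl_1}) and (\ref{eq_dbl_2}) are $\cos\bar{\phi}$ and $\sin\bar{\phi}$ (resp.\ $-\sin\bar\phi$, $\cos\bar\phi$) combinations of the two $\bar\phi$-independent quantities
\begin{align*}
A &\triangleq -\sin\bar{\psi}\,\tau_{\rm Ad}^x + \cos\bar{\psi}\,\tau_{\rm Ad}^y, \\
B &\triangleq \cos\bar{\psi}\sin\bar{\theta}\,\tau_{\rm Ad}^x + \sin\bar{\psi}\sin\bar{\theta}\,\tau_{\rm Ad}^y + \cos\bar{\theta}\,\tau_{\rm Ad}^z.
\end{align*}
Concretely, (\ref{eq_dbl_1}) reads $A\cos\bar\phi + B\sin\bar\phi = 0$ and (\ref{eq_dbl_2}) reads $-A\sin\bar\phi + B\cos\bar\phi = 0$, so the pair is equivalent to $A=B=0$ regardless of $\bar\phi$. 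The condition $A=0$ is precisely $\tan\bar\psi = \tau_{\rm Ad}^y/\tau_{\rm Ad}^x$, i.e.\ (\ref{eq_psi_feas}); and, using $A=0$ to write $\cos\bar\psi\,\tau_{\rm Ad}^x+\sin\bar\psi\,\tau_{\rm Ad}^y = \pm\sqrt{(\tau_{\rm Ad}^x)^2+(\tau_{\rm Ad}^y)^2}$, the condition $B=0$ becomes $\tan\bar\theta = -\tau_{\rm Ad}^z/(\cos\bar\psi\,\tau_{\rm Ad}^x+\sin\bar\psi\,\tau_{\rm Ad}^y)$, i.e.\ (\ref{eq_theta_feas}). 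This shows the prescribed angles satisfy (\ref{eq_nonholo_ad_orbit}), and invoking Lemma~\ref{le_form_feas} gives feasibility, establishing 1). I would also note the degree of freedom in $\bar\phi$ matches the remark that the roll is unconstrained — the nonholonomic constraints pin down only the two rotational directions transverse to the body $x$-axis.

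For statements 2) and 3) I would argue from Definition~\ref{def_RTI} together with the formula for the prescribed angles. If $\bm{\xi}_L^{\wedge}=(\bm{\omega}_L^\wedge,\bm{v}_L)$ is time-invariant and $\bar{\bm{p}}$ is a fixed constant vector, then $\bm{\tau}_{\rm Ad}=\bm{\omega}_L^\wedge\bar{\bm{p}}+\bm{v}_L$ is constant in time, hence $\bar\psi$ from (\ref{eq_psi_feas}) and $\bar\theta$ from (\ref{eq_theta_feas}) are constants (and $\bar\phi$ is chosen constant), so $\bar{\bm{R}}$ is a constant rotation matrix; combined with constant $\bar{\bm{p}}$ this makes $\bar{\bm{g}}$ a constant element of ${\rm SE}(3)$, which is the definition of an RTI formation — giving 2). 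If instead $\bm{\xi}_L^\wedge$ is time-varying, $\bm{\tau}_{\rm Ad}$ generically varies in time, so $\bar\theta,\bar\psi$ and therefore $\bar{\bm{R}}$ vary in time while $\bar{\bm{p}}$ remains constant by hypothesis; by Definition~\ref{def_RTI} this is exactly a P-RTI formation — giving 3).

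The main obstacle I anticipate is purely bookkeeping in statement 1): getting the $\cos\bar\phi/\sin\bar\phi$ factorization of (\ref{eq_dbl_1})--(\ref{eq_dbl_2}) exactly right (sign conventions in $\bar{\bm{R}}^T$ versus $\bar{\bm{R}}$, and the correspondence between rows $2,3$ and the pair $\bm{e}_2,\bm{e}_3$), and then handling the sign ambiguity of $\arctan$ — strictly one should check that the branch choices in (\ref{eq_theta_feas})--(\ref{eq_psi_feas}) are mutually consistent, or equivalently that it suffices to produce one valid $\bar{\bm{R}}$ since $A=B=0$ is the only real requirement. There is also a mild edge case, $\tau_{\rm Ad}^x = \tau_{\rm Ad}^y = 0$ (so $\bar\psi$ is indeterminate and $\bar\theta=\pm\pi/2$), which I would mention corresponds to the leader's adjoint linear velocity being vertical and can be treated separately or excluded by a genericity assumption. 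Beyond that, no deep idea is needed — the result is a direct consequence of Lemma~\ref{le_form_feas} and the structure of (\ref{eq_R_angles}).
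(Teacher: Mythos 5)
Your proposal is correct and follows essentially the same route as the paper: reduce to (\ref{eq_nonholo_ad_orbit}) via Lemma~\ref{le_form_feas}, expand into (\ref{eq_dbl_1})--(\ref{eq_dbl_2}), and extract the conditions determining $\bar{\psi}$ and $\bar{\theta}$ while leaving $\bar{\phi}$ free; statements 2) and 3) are argued identically. Your observation that the pair of equations is a nonsingular $\bar{\phi}$-rotation of $(A,B)$, hence equivalent to $A=B=0$ in one step, is a slightly cleaner packaging of the paper's take-two-linear-combinations-then-substitute-back argument, and your explicit flagging of the $\tau_{\rm Ad}^x=\tau_{\rm Ad}^y=0$ edge case is a point the paper only partially covers (it excludes only $\bm{\omega}_L^{\wedge}\bar{\bm{p}}+\bm{v}_L=\bm{0}$).
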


\begin{proof}
  1) Since the relative position $\bar{\bm{p}}$ has been predefined as constant, then based on Lemma~\ref{le_form_feas}, the feasibility of the formation patten $\bar{\bm{g}}$ is equivalent to making (\ref{eq_nonholo_ad_orbit}) hold by choosing appropriate $\bar{\bm{R}}$.
  Note that if $\bm{\omega}_L^{\wedge}\bar{\bm{p}}+\bm{v}_L=\bm{0}$, the rotation matrix $\bar{\bm{R}}$ can be arbitrarily chosen in ${\rm SO(3)}$. However, such a case results in $\bm{v}_{\rm Ad}=\bm{0}$ based on (\ref{eq_Ad_v}). According to (\ref{eq_xi1_Adxi_0}), this further leads to the follower's linear velocity satisfying $\bm{v}_F=\bm{v}_{\rm Ad}=\bm{0}$. Nevertheless, the fixed-wing UAVs cannot hover in the space, in other words, the linear velocity cannot be zero. Thus, the case of $\bm{\omega}_L^{\wedge}\bar{\bm{p}}+\bm{v}_L=\bm{0}$ is not considered in this paper. Regarding $\bm{\omega}_L^{\wedge}\bar{\bm{p}}+\bm{v}_L\ne\bm{0}$, it is obtained that
  \begin{equation}\label{eq_tau123}
    \bm{\omega}_L^{\wedge}\bar{\bm{p}}+\bm{v}_L=
    \begin{bmatrix}
      \omega_L^y\bar{p}^z-\omega_L^z\bar{p}^y+v_L^x \\
      \omega_L^z\bar{p}^x-\omega_L^x\bar{p}^z \\
      \omega_L^x\bar{p}^y-\omega_L^y\bar{p}^x
    \end{bmatrix}\triangleq
    \begin{bmatrix}
      \tau_{\rm Ad}^x \\
      \tau_{\rm Ad}^y \\
      \tau_{\rm Ad}^z
    \end{bmatrix}.
  \end{equation}
  Additionally, the rotation matrix $\bar{\bm{R}}$ can be computed base on $\bar{\phi},\bar{\theta},\bar{\psi}$ according to the transformation in (\ref{eq_R_angles}). We substitute the $\bar{\bm{R}}$ in (\ref{eq_R_angles}) and the $\bm{\omega}_L^{\wedge}\bar{\bm{p}}+\bm{v}_L$ in (\ref{eq_tau123}) into (\ref{eq_nonholo_ad_orbit}), and after a series of derivation, the constraints (\ref{eq_nonholo_ad_orbit}) can be reformulated as two algebraic equations (\ref{eq_dbl_1}) and (\ref{eq_dbl_2}). Thus, according to Lemma~\ref{le_form_feas}, the formation pattern $\bar{\bm{g}}$ is feasible if these two equations (\ref{eq_dbl_1}) and (\ref{eq_dbl_2}) hold. Since $\bar{\bm{p}}$ and $\bm{\xi}_L^{\wedge}$ are both predefined, the variables $\tau_{\rm Ad}^x,\tau_{\rm Ad}^y,\tau_{\rm Ad}^z$ are all known. Therefore, we can propose conditions to ensure formation feasibility by deriving the Euler angles $\bar{\phi},\bar{\theta},\bar{\psi}$ from (\ref{eq_dbl_1}) and (\ref{eq_dbl_2}) straightforwardly.

  We firstly derive the yaw angle $\bar{\psi}$. By the transformation of (\ref{eq_dbl_1})$\times\cos\bar{\phi}+$(\ref{eq_dbl_2})$\times\sin\bar{\phi}$, we cancel the terms involving $\bar{\theta}$ in (\ref{eq_dbl_1}) and (\ref{eq_dbl_2}), and obtain that
  \setcounter{equation}{31} 
  \begin{equation}
    (\sin^2\bar{\phi}+\cos^2\bar{\phi})\sin\bar{\psi}\tau_{\rm Ad}^x-(\sin^2\bar{\phi}+\cos^2\bar{\phi})\cos\bar{\psi}\tau_{\rm Ad}^y=0.
  \end{equation}
  Due to $\sin^2\bar{\phi}+\cos^2\bar{\phi}=1$, we have
  \begin{equation}\label{eq_psi_feas_ex}
    \sin\bar{\psi}\tau_{\rm Ad}^x=\cos\bar{\psi}\tau_{\rm Ad}^y,
  \end{equation}
  which is further written to be
  \begin{equation}\label{eq_psi_feas_ex_2}
    \tan\bar{\psi}=\frac{\tau_{\rm Ad}^y}{\tau_{\rm Ad}^x}.
  \end{equation}
  So that, the yaw angle $\bar{\psi}$ is obtained as in (\ref{eq_psi_feas}). Next, the pitch angle $\bar{\theta}$ will be derived. By the transformation of (\ref{eq_dbl_1})$\times\sin\bar{\phi}+$(\ref{eq_dbl_2})$\times\cos\bar{\phi}$, the terms involving $\bar{\phi}$ are cancelled and it follows that
  \begin{equation}\label{eq_theta_feas_ex}
    \sin\bar{\theta}(\cos\bar{\psi}\tau_{\rm Ad}^x+\sin\bar{\psi}\tau_{\rm Ad}^y)+\cos\bar{\theta}\tau_{\rm Ad}^z=0.
  \end{equation}
  Then, it is further obtained that
  \begin{equation}\label{eq_theta_feas_ex_2}
    \tan\bar{\theta}=-\frac{\tau_{\rm Ad}^z}{\cos\bar{\psi}\tau_{\rm Ad}^x+\sin\bar{\psi}\tau_{\rm Ad}^y},
  \end{equation}
  where $\bar{\psi}$ is provided in (\ref{eq_psi_feas}). Hence, the pitch angle $\bar{\theta}$ is given by (\ref{eq_theta_feas}). Lastly, the expression of the roll angle $\bar{\phi}$ should be derived. To this end, the formulas (\ref{eq_dbl_1}) and (\ref{eq_dbl_2}) are reorganized to be
  \begin{align}
    & \left(-\sin\bar{\psi}\tau_{\rm Ad}^x+\cos\bar{\psi}\tau_{\rm Ad}^y\right)\cos\bar{\phi} + \sin\bar{\phi}\times \nonumber \\
    & \qquad \left(\sin\bar{\theta}\left(\cos\bar{\psi}\tau_{\rm Ad}^x+\sin\bar{\psi}\tau_{\rm Ad}^y\right) +\cos\bar{\theta}\tau_{\rm Ad}^z\right)=0, \label{eq_phi_feas_ex1}\\
    & \left(\sin\bar{\psi}\tau_{\rm Ad}^x-\cos\bar{\psi}\tau_{\rm Ad}^y\right)\sin\bar{\phi} + \cos\bar{\phi}\times \nonumber \\
    & \qquad \left(\sin\bar{\theta}\left(\cos\bar{\psi}\tau_{\rm Ad}^x+\sin\bar{\psi}\tau_{\rm Ad}^y\right) +\cos\bar{\theta}\tau_{\rm Ad}^z\right)=0. \label{eq_phi_feas_ex2}
  \end{align}
  Substituting (\ref{eq_theta_feas_ex}) into (\ref{eq_phi_feas_ex1}) and (\ref{eq_phi_feas_ex2}), it follows that
  \begin{align}
     \left(\sin\bar{\psi}\tau_{\rm Ad}^x-\cos\bar{\psi}\tau_{\rm Ad}^y\right)\cos\bar{\phi} =0, \label{eq_phi_feas_ex3}\\
     \left(\sin\bar{\psi}\tau_{\rm Ad}^x-\cos\bar{\psi}\tau_{\rm Ad}^y\right)\sin\bar{\phi} =0. \label{eq_phi_feas_ex4}
  \end{align}
  According to (\ref{eq_psi_feas_ex}), we have $\sin\bar{\psi}\tau_{\rm Ad}^x-\cos\bar{\psi}\tau_{\rm Ad}^y=0$. Thus, the formulas (\ref{eq_phi_feas_ex3}) and (\ref{eq_phi_feas_ex4}) always hold no matter what value of the roll angle $\bar{\phi}$ is predefined. Therefore, the formation pattern $\bar{\bm{g}}$ given by (\ref{eq_theta_feas}) and (\ref{eq_psi_feas}) satisfies the constraints (\ref{eq_dbl_1}) and (\ref{eq_dbl_2}), indicating that $\bar{\bm{g}}$ is feasible for the formation of the fixed-wing UAVs.

  2) According to Definition~\ref{def_RTI}, the formation pattern $\bar{\bm{g}}$ is RTI if the rotation matrix $\bar{\bm{R}}$ and position vector $\bar{\bm{p}}$ are both time-invariant. Note that $\bar{\bm{p}}$ has been specified as a constant vector in $\mathbb{R}^3$. Thus, we only have to prove $\bar{\bm{R}}$ is constant. Since the leader's velocity $\bm{\xi}_L^{\wedge}=(\bm{\omega}_L^{\wedge},\bm{v}_L)$ and the position vector $\bar{\bm{p}}$ are both time-invariant, it follows from (\ref{eq_tau123}) that $\tau_{\rm Ad}^x,\tau_{\rm Ad}^y,\tau_{\rm Ad}^z$ are all constants. Then, the attitude angles $\bar{\theta}$ and $\bar{\psi}$ given in (\ref{eq_theta_feas}) and (\ref{eq_psi_feas}) can both keep fixed values. Plus a constant $\bar{\phi}$, the rotation matrix $\bar{\bm{R}}$ given by (\ref{eq_R_angles}) is time-invariant, which indicates that the formation pattern $\bar{\bm{g}}$ is RTI.

  3) Considering the fact that the leader's velocity $\bm{\xi}_L^{\wedge}=(\bm{\omega}_L^{\wedge},\bm{v}_L)$ is time-varying, we can obtain the attitude angles $\bar{\theta}$ and $\bar{\psi}$ given in (\ref{eq_theta_feas}) and (\ref{eq_psi_feas}) will vary with time as well, which contributes to a time-varying rotation matrix $\bar{\bm{R}}$ accordingly. Therefore, only the position vector $\bm{p}$ is constant, implying that the formation pattern $\bar{\bm{g}}$ is P-RTI according to Definition~\ref{def_RTI}.
\end{proof}

\begin{remark}
  The formulas (\ref{eq_dbl_1}) and (\ref{eq_dbl_2}) are regarded as the nonholonomic constraints for the RTI formation. Note that a fixed-wing UAV is nonholonomic constrained in two directions of the translational motion, i.e., swaying and heaving, while it is interesting to observe that these two nonholonomic constraints are eventually formulated to be two constraints (\ref{eq_dbl_1}) and (\ref{eq_dbl_2}) for three attitude angles $\bar{\phi},\bar{\theta},\bar{\psi}$. This further implies an unconstrained rotation DOF is indeed preserved in the choice of the formation pattern, which is verified by the proof of Theorem~\ref{theo_RTI_general} that $\bar{\psi}$ and $\bar{\theta}$ should be given by (\ref{eq_psi_feas}) and (\ref{eq_theta_feas}) to make the RTI formation feasible, while $\bar{\phi}$ is freely chosen as needed.
\end{remark}

In Theorem~\ref{theo_RTI_general}, the desired relative attitude matrix $\bar{\bm{R}}$ is decided by three attitude angles $\bar{\phi},\bar{\theta},\bar{\psi}$. Then, a naturally-arising question is when $\bar{\phi},\bar{\theta},\bar{\psi}$ can all be $0$, followed by the fact that $\bar{\bm{R}}$ degenerates to be the identity matrix $\bm{I}_3$. Intuitively, this case means the follower $\bm{g}_F$ has the same attitude as the leader $\bm{g}_L$. In other words, they point to the same direction in the formation pattern, which is relatively common in real-world scenarios. Therefore, the following theorem illustrates the feasibility of such a particular formation pattern for the fixed-wing UAVs.

\begin{corollary}\label{theo_RTI_identity}
  Let $\bar{\bm{g}}=(\bm{I}_3,\bar{\bm{p}})$ denote an RTI formation pattern, where $\bm{I}_3$ is the identity matrix in $\mathbb{R}^{3\times 3}$ and $\bar{\bm{p}}$ is a constant vector in $\mathbb{R}^3$. Then, $\bar{\bm{g}}$ is feasible for the fixed-wing UAVs under the nonholonomic constraints, if there holds  $\bm{\omega}_L^{\wedge}=\bm{0}_{3\times 3}$ or $\omega_L^x=0,\bar{p}^x=0$.
\end{corollary}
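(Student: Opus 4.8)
The plan is to obtain the corollary as a direct specialization of Lemma~\ref{le_form_feas} (equivalently, of Theorem~\ref{theo_RTI_general}) to the case $\bar{\bm R}=\bm I_3$. First I would substitute $\bar{\bm R}=\bm I_3$ into the nonholonomic feasibility condition (\ref{eq_nonholo_ad_orbit}); since $\bar{\bm R}^T=\bm I_3$, it collapses to $\begin{bmatrix}\bm e_2 & \bm e_3\end{bmatrix}^T(\bm\omega_L^{\wedge}\bar{\bm p}+\bm v_L)=\bm 0_{2\times 1}$, i.e., the second and third components of $\bm\omega_L^{\wedge}\bar{\bm p}+\bm v_L$ must vanish. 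In the notation $\tau_{\rm Ad}^x,\tau_{\rm Ad}^y,\tau_{\rm Ad}^z$ of (\ref{eq_tau123}) this reads $\tau_{\rm Ad}^y=\tau_{\rm Ad}^z=0$, which is exactly the condition under which the feasible Euler angles furnished by (\ref{eq_psi_feas}) and (\ref{eq_theta_feas}) satisfy $\bar\psi=\bar\theta=0$ (here I would note $\tau_{\rm Ad}^x\neq 0$ whenever $\bm\omega_L^{\wedge}\bar{\bm p}+\bm v_L\neq\bm 0$, the degenerate zero case being excluded as in Theorem~\ref{theo_RTI_general}); taking the free angle $\bar\phi=0$ then makes $\bar{\bm R}$ in (\ref{eq_R_angles}) equal to $\bm I_3$, so that $(\bm I_3,\bar{\bm p})$ is precisely the feasible pattern delivered by Theorem~\ref{theo_RTI_general} for these angle choices.

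It then remains to verify that each of the two stated hypotheses forces $\tau_{\rm Ad}^y=\tau_{\rm Ad}^z=0$. If $\bm\omega_L^{\wedge}=\bm 0_{3\times 3}$, then, since the leader is itself a (possibly virtual) fixed-wing body whose velocity obeys the nonholonomic constraints (so $v_L^y=v_L^z=0$ by Lemma~\ref{lem_kine_convex}), one has $\bm\omega_L^{\wedge}\bar{\bm p}+\bm v_L=\bm v_L=[v_L^x\ \ 0\ \ 0]^T$, whose last two entries are zero. If instead $\omega_L^x=0$ and $\bar p^x=0$, I would simply read off from the explicit entries in (\ref{eq_tau123}) that $\tau_{\rm Ad}^y=\omega_L^z\bar p^x-\omega_L^x\bar p^z=0$ and $\tau_{\rm Ad}^z=\omega_L^x\bar p^y-\omega_L^y\bar p^x=0$. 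In either case (\ref{eq_nonholo_ad_orbit}) holds with $\bar{\bm R}=\bm I_3$, so by Lemma~\ref{le_form_feas} the pattern $\bar{\bm g}=(\bm I_3,\bar{\bm p})$ is feasible under the nonholonomic constraints; since $\bar{\bm p}$ is constant, it is moreover RTI by Definition~\ref{def_RTI}.

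I do not expect a genuine obstacle here. The only points needing a little care are: invoking the leader's own nonholonomic constraint to reduce $\bm v_L$ to a pure $x$-component in the first case; keeping the degenerate case $\bm\omega_L^{\wedge}\bar{\bm p}+\bm v_L=\bm 0$ excluded, consistently with Theorem~\ref{theo_RTI_general}, so that (\ref{eq_psi_feas})--(\ref{eq_theta_feas}) remain well defined; and, stylistically, deciding whether to run the argument through Lemma~\ref{le_form_feas} directly or through the specialization $\bar\phi=\bar\theta=\bar\psi=0$ of Theorem~\ref{theo_RTI_general}. I would take the former as the main line and mention the latter as the geometric reading, and I would also remark that these hypotheses are the natural ones: a nonrotating leader, or a leader whose axis of rotation and the relative offset both lie off the $x$-axis, are exactly the cases in which the follower can point the same way as the leader.
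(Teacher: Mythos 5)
Your proposal is correct and follows essentially the same route as the paper: specialize the feasibility condition (\ref{eq_nonholo_ad_orbit}) to $\bar{\bm{R}}=\bm{I}_3$, then verify in each of the two cases that $\bm{\omega}_L^{\wedge}\bar{\bm{p}}+\bm{v}_L$ reduces to a pure $x$-component (using the leader's own nonholonomic constraint in the first case and the explicit entries of (\ref{eq_tau123}) in the second). The additional reading through Theorem~\ref{theo_RTI_general} with $\bar\phi=\bar\theta=\bar\psi=0$ is a harmless embellishment the paper omits.
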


\begin{proof}
   According to Lemma~\ref{le_form_feas}, a formation pattern $\bar{\bm{R}}$ is feasible if the condition (\ref{eq_nonholo_ad_orbit}) holds. Once it has $\bm{\omega}_L^{\wedge}=\bm{0}_{3\times 3}$, the condition (\ref{eq_nonholo_ad_orbit}) degenerates to
  \begin{equation}\label{eq_feas_identity}
    [\bm{e}_2\quad\bm{e}_3]^T\bar{\bm{R}}^T\bm{v}_L=\bm{0}_{2\times 1}.
  \end{equation}
  Due to $\bm{v}_L=[v_L^x\quad 0 \quad 0]^T$, then the formula (\ref{eq_feas_identity}) naturally holds for $\bar{\bm{R}}=\bm{I}_3$, which indicates the formation pattern $\bar{\bm{g}}(\bm{I}_3,\bar{\bm{p}})$ is feasible for the fixed-wing UAVs. Regarding the case of $\omega_L^x=0,\bar{p}^x=0$, it follows from (\ref{eq_tau123}) that
  \begin{equation}\label{eq_feas_identity_2}
    \bm{\omega}_L^{\wedge}\bar{\bm{p}}+\bm{v}_L=[\tau_{\rm Ad}^x\quad 0 \quad 0]^T.
  \end{equation}
   By substituting (\ref{eq_feas_identity_2}) and $\bar{\bm{R}}=\bm{I}_3$ into (\ref{eq_nonholo_ad_orbit}), we can verify the rightness of (\ref{eq_nonholo_ad_orbit}), implying the feasibility of the formation pattern $\bar{\bm{g}}=(\bm{I}_3,\bar{\bm{p}})$.
\end{proof}

\begin{figure}[t]
  \centering
  \includegraphics[width=0.25\textwidth]{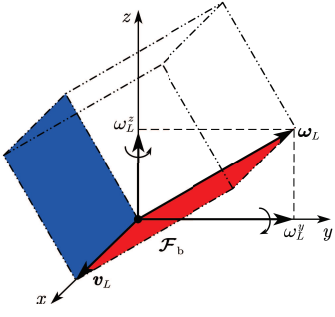}
  \caption{Schematic diagram of the velocities in the case of $\omega_L^x=0$}
  \label{fig_circular}
\end{figure}

\begin{remark}
  In Corollary~\ref{theo_RTI_identity}, it is evident that $\bm{\omega}_L^{\wedge}=\bm{0}_{3\times 3}$ refers to a straight line trajectory, while the case of $\omega_L^x=0$ ($\omega_L^y\ne 0,\omega_L^z\ne 0$) implicitly represents the leader moves along a circle in the plane perpendicular to $\bm{\omega}_L$. As shown in Figure~\ref{fig_circular}, once it has $\omega_L^x=0$, the angular velocity $\bm{\omega}_L$ lies in the $y-z$ plane and is orthogonal to the linear velocity $\bm{v}_L$, followed by the fact that $\bm{\omega}_L$ and $\bm{v}_L$ defines a plane (the red plane). Then, according to theoretical mechanics, the fixed-wing UAVs move along a circle in the blue plane (which is perpendicular to the red plane) with the radius $r=\|\bm{v}_L\|/\|\bm{\omega}_L\|$. Therefore, Corollary~\ref{theo_RTI_identity} demonstrates intuitively that the fixed-wing UAVs have the same orientations in two cases, i.e., moving along a straight line, or a circle with parallel formation patterns.
\end{remark}

\begin{remark}
  It is well known that the main drawback of the Euler angles is the singularity at certain points. However, the formation feasibility and controller proposed in this paper do not suffer from any singularity. The reasons are given below. For the $\phi-\theta-\psi$ sequence, the singularity appears when $\theta=\pm\frac{\pi}{2}$. In such a case, one rotation DOF is lost, so that the rest two angles cannot be well defined. This can be explicitly viewed in mathematics. Once $\theta=\pm\frac{\pi}{2}$, the rotation matrix in (\ref{eq_R_angles}) degenerates to
  \begin{equation}\label{eq_R_angles_degen}
    \bm{R}=\pm\begin{bmatrix}
             0 & \sin(\phi-\psi) & \cos(\phi-\psi) \\
             0 & \cos(\phi-\psi) & -\sin(\phi-\psi) \\
             -1 & 0 & 0
           \end{bmatrix},
  \end{equation}
  which is only decided by the error between $\phi$ and $\psi$. Therefore, given a certain attitude (or a rotation matrix) with $\theta=\pm\frac{\pi}{2}$, we cannot derive the specific $\phi$ and $\psi$ from (\ref{eq_R_angles_degen}) exactly. In short, the singularity arises from the fact that the transformation from rotation matrix to Euler angles is one-to-many. However, in this paper, we use Euler angles $\bar{\phi},\bar{\theta},\bar{\psi}$ to derive rotation matrix $\bar{\bm{R}}$ instead. Once $\bar{\phi},\bar{\theta},\bar{\psi}$ are provided, $\bar{\bm{R}}$ can be uniquely obtained without any singularity according to (\ref{eq_R_angles}). Given the fact that the transformation from the Euler angles to the rotation matrix is a surjective map, $\bar{\bm{R}}$ is always well defined by (\ref{eq_R_angles}), even if in the case of $\theta=\pm\frac{\pi}{2}$. Thus, there does not exist any singularity from the Euler angles.
\end{remark}

\subsection{Feasibility under input saturation constraints}

For the fixed-wing UAVs, the angular velocity and linear velocity are typically restricted by the saturation constraints. Particularly, the linear velocity of the fixed-wing UAV should always be positive and has a lower bound to maintain the flight altitude. Note that the RTI formation requires the followers to rotate with the leader so as to keep the formation pattern invariant. Then, if the leader takes sharp turns during the movement, the followers are required to provide a large velocity to maintain the RTI formation accordingly. However, this probably violates the velocity saturation constraints, leading to the formation pattern unfeasible. Hence, in this subsection, the formation feasibility will be investigated under the input saturation constraint. Before providing the main results, we introduce the following lemma which will be utilized later.

\begin{lemma}[\cite{Bullo2005Geometric}]\label{lem_R_times_a_b}
  Given $\bm{R}\in{\rm SO(3)}$ and the associated adjoint map ${\rm Ad}_{\bm{R}}\in\mathcal{L}(\mathfrak{so}(3);\mathfrak{so}(3))$, where $\mathcal{L}(\mathcal{U};\mathcal{V})$ denote the set of linear maps from the space $\mathcal{U}$ to the space $\mathcal{V}$. Then, the matrix representation of the adjoint map is given by $[{\rm Ad}_{\bm{R}}]=\bm{R}$.
\end{lemma}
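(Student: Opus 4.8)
The plan is to reduce the lemma to the single conjugation identity
\[
\bm{R}\,\bm{a}^{\wedge}\,\bm{R}^{T}=(\bm{R}\bm{a})^{\wedge},\qquad\forall\,\bm{R}\in{\rm SO(3)},\ \bm{a}\in\mathbb{R}^3,
\]
and then to read off the matrix representation directly from it. By definition the adjoint map on $\mathfrak{so}(3)$ acts by conjugation, ${\rm Ad}_{\bm{R}}\bm{S}=\bm{R}\bm{S}\bm{R}^{-1}=\bm{R}\bm{S}\bm{R}^{T}$ for $\bm{S}\in\mathfrak{so}(3)$, where $\bm{R}^{-1}=\bm{R}^{T}$ has been used. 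Coordinatizing $\mathfrak{so}(3)$ by $\mathbb{R}^3$ through the isomorphism $\cdot^{\wedge}$, any linear map $T\in\mathcal{L}(\mathfrak{so}(3);\mathfrak{so}(3))$ has matrix $[T]\in\mathbb{R}^{3\times 3}$ characterized by $T(\bm{a}^{\wedge})=([T]\bm{a})^{\wedge}$; hence the displayed identity says exactly that $[{\rm Ad}_{\bm{R}}]=\bm{R}$.

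To establish the identity I would act on an arbitrary $\bm{b}\in\mathbb{R}^3$ and use only the defining property $\bm{a}^{\wedge}\bm{b}=\bm{a}\times\bm{b}$ of the hat map together with the transformation of the cross product under proper rotations. First, $(\bm{R}\bm{a}^{\wedge}\bm{R}^{T})\bm{b}=\bm{R}\bigl(\bm{a}\times(\bm{R}^{T}\bm{b})\bigr)$. Next I would invoke the standard fact that for invertible $\bm{A}$ one has $(\bm{A}\bm{u})\times(\bm{A}\bm{v})=\det(\bm{A})\,\bm{A}^{-T}(\bm{u}\times\bm{v})$; specializing to $\bm{A}=\bm{R}\in{\rm SO(3)}$, where $\det\bm{R}=1$ and $\bm{R}^{-T}=\bm{R}$, this yields $\bm{R}(\bm{u}\times\bm{v})=(\bm{R}\bm{u})\times(\bm{R}\bm{v})$. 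Applying it with $\bm{u}=\bm{a}$, $\bm{v}=\bm{R}^{T}\bm{b}$ gives $\bm{R}\bigl(\bm{a}\times(\bm{R}^{T}\bm{b})\bigr)=(\bm{R}\bm{a})\times(\bm{R}\bm{R}^{T}\bm{b})=(\bm{R}\bm{a})\times\bm{b}=(\bm{R}\bm{a})^{\wedge}\bm{b}$. Since $\bm{b}$ was arbitrary, the two $3\times 3$ matrices $\bm{R}\bm{a}^{\wedge}\bm{R}^{T}$ and $(\bm{R}\bm{a})^{\wedge}$ coincide, which is the claimed identity; combining this with the first paragraph gives $[{\rm Ad}_{\bm{R}}]=\bm{R}$.

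The only mildly delicate point — and essentially the whole content of the lemma — is the cross-product transformation rule, where properness ($\det\bm{R}=1$) is indispensable, since an improper orthogonal matrix would introduce an extra minus sign. If one prefers not to quote $(\bm{A}\bm{u})\times(\bm{A}\bm{v})=\det(\bm{A})\bm{A}^{-T}(\bm{u}\times\bm{v})$, an equivalent route is a componentwise expansion using the Levi-Civita identity $\varepsilon_{ijk}R_{ip}R_{jq}R_{kr}=\det(\bm{R})\,\varepsilon_{pqr}=\varepsilon_{pqr}$, which itself follows from $\bm{R}^{T}\bm{R}=\bm{I}_3$ together with $\det\bm{R}=1$; either way the computation is short and routine. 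Since the result is classical, one could also simply defer to \cite{Bullo2005Geometric} as the statement already does, but the argument above keeps the excerpt self-contained.
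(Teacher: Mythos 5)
Your proof is correct. Note that the paper itself gives no proof of this lemma at all --- it is stated as a quoted fact from \cite{Bullo2005Geometric} --- so there is nothing in the paper to compare against; your argument simply supplies the standard derivation that the authors deferred to the reference. The two steps you isolate are exactly the right ones: (i) under the coordinatization of $\mathfrak{so}(3)$ by $\cdot^{\wedge}$, the claim $[{\rm Ad}_{\bm{R}}]=\bm{R}$ is equivalent to the conjugation identity $\bm{R}\,\bm{a}^{\wedge}\,\bm{R}^{T}=(\bm{R}\bm{a})^{\wedge}$, and (ii) that identity follows from equivariance of the cross product under proper rotations, $\bm{R}(\bm{u}\times\bm{v})=(\bm{R}\bm{u})\times(\bm{R}\bm{v})$, applied to $\bm{u}=\bm{a}$, $\bm{v}=\bm{R}^{T}\bm{b}$. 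Your emphasis on $\det\bm{R}=1$ being essential (an improper orthogonal matrix would flip the sign) is also the right caveat, and your alternative Levi--Civita route is equivalent. One could quibble that you should confirm the adjoint map on ${\rm SO(3)}$ is indeed conjugation in the convention of this paper, but that is consistent with the paper's definition ${\rm Ad}_{\bm{g}}\bm{\eta}^{\wedge}=\bm{g}\bm{\eta}^{\wedge}\bm{g}^{-1}$ in (\ref{eq_Ad_def}) restricted to the rotation part, so no gap remains.
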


Now, we propose the formation feasibility under the input saturation constraint in the following theorem.

\begin{theorem}\label{theo_sat}
  Let $\bar{\bm{g}}=(\bar{\bm{R}},\bar{\bm{p}})$ denote the desired formation pattern. The saturation constraints for the leader's velocity are given by $\|\bm{\omega}_L\|\leq\alpha_L$, $\underline{\beta}_L \leq v_L^x \leq \overline{\beta}_L$, where $\alpha_L>0$, $\overline{\beta}_L>\underline{\beta}_L>0$. Similarly, the saturation constraints for the follower's velocity are given by $\|\bm{\omega}_F\|\leq\alpha_F$, $\underline{\beta}_F \leq v_F^x \leq \overline{\beta}_F$, where $\alpha_F=\alpha_L$, $0<\underline{\beta}_F<\underline{\beta}_L<\overline{\beta}_L<\overline{\beta}_F$. Assume that the leader's velocity has satisfied the saturation constraints. Then, the statements given below illustrate the conditions when the follower's velocity satisfies the saturation constraint from the perspective of the desired formation pattern.
  \begin{enumerate}[1)]
    \item Regarding the angular velocity $\bm{\omega}_F$, there always holds $\|\bm{\omega}_F\|\leq\alpha_F$ for arbitrary desired formation pattern $\bar{\bm{g}}$.
    \item Regarding the linear velocity $v_F^x$, there holds $\underline{\beta}_F \leq v_F^x \leq \overline{\beta}_F$ if the desired relative position $\bar{\bm{p}}$ satisfies
        \begin{equation}\label{eq_p_saturation_constraint}
          \|\bar{\bm{p}}\|=\left\{
          \begin{aligned}
            c_1,\quad {\rm for}\ \|\bm{\omega}_L\|=0, \\
            c_2,\quad {\rm for}\ \|\bm{\omega}_L\|\ne 0,
          \end{aligned}
          \right.
        \end{equation}
        where $c_1$ is an arbitrary positive scalar, and $c_2$ is given by
        \begin{equation}\label{eq_p_saturation_constraint_c2}
          c_2=\min\left\{\frac{\overline{\beta}_F-\overline{\beta}_L}{\alpha_L}, \frac{\underline{\beta}_L-\underline{\beta}_F}{\alpha_L}\right\}.
        \end{equation}
  \end{enumerate}
\end{theorem}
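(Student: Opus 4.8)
The plan is to push everything through the maintenance condition of Lemma~\ref{lem_maintain}, namely $\bm{\xi}^{\wedge}_F = {\rm Ad}_{\bar{\bm{g}}^{-1}}\bm{\xi}^{\wedge}_L$, and to read the angular and linear parts of the follower's velocity directly off the block computation already performed in~(\ref{eq_Ad_gbar_xi0})--(\ref{eq_Ad_v}). That is, I would start from $\bm{\omega}_F^{\wedge} = \bm{\omega}_{\rm Ad}^{\wedge} = \bar{\bm{R}}^T\bm{\omega}_L^{\wedge}\bar{\bm{R}}$ and $\bm{v}_F = \bm{v}_{\rm Ad} = \bar{\bm{R}}^T(\bm{\omega}_L^{\wedge}\bar{\bm{p}} + \bm{v}_L)$, restricting attention to patterns $\bar{\bm{g}}$ that are already feasible under the nonholonomic constraints in the sense of Theorem~\ref{theo_RTI_general}, so that $\bm{v}_F$ has the body-frame form $[v_F^x\ \ 0\ \ 0]^T$. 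Since $\bar{\bm{R}}\in{\rm SO(3)}$ is orthogonal and hence norm-preserving, both claims then reduce to elementary norm estimates.

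For statement~1), I would use Lemma~\ref{lem_R_times_a_b} (equivalently the identity $\bar{\bm{R}}^T\bm{\omega}_L^{\wedge}\bar{\bm{R}} = (\bar{\bm{R}}^T\bm{\omega}_L)^{\wedge}$) to rewrite $\bm{\omega}_F = \bar{\bm{R}}^T\bm{\omega}_L$, whence $\|\bm{\omega}_F\| = \|\bar{\bm{R}}^T\bm{\omega}_L\| = \|\bm{\omega}_L\| \le \alpha_L = \alpha_F$. Because $\bar{\bm{R}}$ drops out of this bound, it holds for every $\bar{\bm{g}}$, which is exactly statement~1).

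For statement~2), the starting point is that the nonholonomic constraint forces $v_F^y = v_F^z = 0$, hence $|v_F^x| = \|\bm{v}_F\| = \|\bar{\bm{R}}^T(\bm{\omega}_L^{\wedge}\bar{\bm{p}} + \bm{v}_L)\| = \|\bm{\omega}_L^{\wedge}\bar{\bm{p}} + \bm{v}_L\|$, again by orthogonality of $\bar{\bm{R}}$. I would then apply the triangle inequality in both directions, $\|\bm{v}_L\| - \|\bm{\omega}_L\times\bar{\bm{p}}\| \le |v_F^x| \le \|\bm{v}_L\| + \|\bm{\omega}_L\times\bar{\bm{p}}\|$, together with $\|\bm{\omega}_L\times\bar{\bm{p}}\| \le \|\bm{\omega}_L\|\,\|\bar{\bm{p}}\| \le \alpha_L\|\bar{\bm{p}}\|$ and the assumed leader bound $\underline{\beta}_L \le v_L^x = \|\bm{v}_L\| \le \overline{\beta}_L$. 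Splitting into the two cases of~(\ref{eq_p_saturation_constraint}): when $\|\bm{\omega}_L\| = 0$ the cross term vanishes identically, so $|v_F^x| = v_L^x \in [\underline{\beta}_L, \overline{\beta}_L] \subset [\underline{\beta}_F, \overline{\beta}_F]$ irrespective of $\|\bar{\bm{p}}\|$, which explains why $c_1$ may be an arbitrary positive scalar; when $\|\bm{\omega}_L\| \ne 0$ and $\|\bar{\bm{p}}\| \le c_2$, the choice~(\ref{eq_p_saturation_constraint_c2}) gives $v_F^x \le \overline{\beta}_L + \alpha_L c_2 \le \overline{\beta}_L + (\overline{\beta}_F - \overline{\beta}_L) = \overline{\beta}_F$ and $v_F^x \ge \underline{\beta}_L - \alpha_L c_2 \ge \underline{\beta}_L - (\underline{\beta}_L - \underline{\beta}_F) = \underline{\beta}_F > 0$, which is statement~2).

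The one point I expect to require care — and would flag explicitly — is the passage from the magnitude $|v_F^x|$ to the signed quantity $v_F^x$ appearing in the saturation bound $\underline{\beta}_F \le v_F^x \le \overline{\beta}_F$: the nonholonomic constraint alone only determines $\bm{v}_F$ up to sign along the body $x$-axis. I would close this gap by invoking the standing modelling assumption that the fixed-wing UAV cannot hover, i.e. $v_F^x > 0$ (already used in the proof of Theorem~\ref{theo_RTI_general}), together with feasibility of $\bar{\bm{g}}$ under the nonholonomic constraints, so that $v_F^x = |v_F^x| = \|\bm{\omega}_L^{\wedge}\bar{\bm{p}} + \bm{v}_L\|$ and the estimates above apply verbatim. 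Everything else is a routine chain of orthogonality and triangle-inequality steps and needs no machinery beyond Lemmas~\ref{lem_maintain}, \ref{le_form_feas} and~\ref{lem_R_times_a_b}.
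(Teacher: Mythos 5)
Your proposal is correct and follows essentially the same route as the paper's own proof: the maintenance condition of Lemma~\ref{lem_maintain} together with the adjoint decomposition (\ref{eq_Ad_omega})--(\ref{eq_Ad_v}), Lemma~\ref{lem_R_times_a_b} and orthogonality of $\bar{\bm{R}}$ for the angular bound, and the two-sided triangle inequality with the case split on $\|\bm{\omega}_L\|$ for the linear bound. Your explicit handling of the passage from $|v_F^x|$ to the signed $v_F^x$ (via feasibility of $\bar{\bm{g}}$ and the no-hover assumption) is in fact slightly more careful than the paper, which simply writes $v_F^x=\|\bm{\omega}_L^{\wedge}\bar{\bm{p}}+\bm{v}_L\|$ without comment.
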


\begin{proof}
  According to Lemma~\ref{lem_maintain} and (\ref{eq_Ad_omega})(\ref{eq_Ad_v}), the formulation of the follower's angular and linear velocities are able to be expressed as
  \begin{align}
    \bm{\omega}_F & = (\bar{\bm{R}}^T\bm{\omega}_L^{\wedge}\bar{\bm{R}})^{\vee}, \label{eq_lemma_sat_cons_omega_F_hat} \\
    \bm{v}_F & =\bar{\bm{R}}^T(\bm{\omega}_L^{\wedge}\bar{\bm{p}}+\bm{v}_L). \label{eq_lemma_sat_cons_v_F}
  \end{align}

  1) The angular velocity $\bm{\omega}_F$ can be rewritten as
  \begin{equation}
    \bm{\omega}_F=({\rm Ad}_{\bar{\bm{R}}^T}\bm{\omega}_L^{\wedge})^{\vee}=[{\rm Ad}_{\bar{\bm{R}}^T}]\bm{\omega}_L.
  \end{equation}
  Then, it follows from Lemma~\ref{lem_R_times_a_b} that
  \begin{equation}\label{eq_lemma_sat_cons_omega_F}
    \bm{\omega}_F=\bar{\bm{R}}^T\bm{\omega}_L,
  \end{equation}
  which becomes a rotation transformation with respect to $\bm{\omega}_L$. Then, with the property of $\bar{\bm{R}}\bar{\bm{R}}^T=\bm{I}_3$, we have
  \begin{equation}
    \|\bm{\omega}_F\|=\sqrt{\bm{\omega}_L^T\bar{\bm{R}}\bar{\bm{R}}^T\bm{\omega}_L}=\sqrt{\bm{\omega}_L^T\bm{\omega}_L}=\|\bm{\omega}_L\| \leq \alpha_L=\alpha_F,
  \end{equation}
  which indicates that $\|\bm{\omega}_F\|\leq\alpha_F$ always holds no matter the choice of $\bar{\bm{g}}$.

  2) Regarding the linear velocity, it can be obtained from (\ref{eq_lemma_sat_cons_v_F}) that
  \begin{equation}
    v_F^x=\| \bm{\omega}_L\bar{\bm{p}}+\bm{v}_L \|.
  \end{equation}
  Due to $\bm{v}_L>\bm{0}$, we further have
  \begin{equation}\label{eq_v_F^x_ineq}
    -\| \bm{\omega}_L\| \|\bar{\bm{p}}\|+ \|\bm{v}_L \| \leq v_F^x \leq \| \bm{\omega}_L\| \|\bar{\bm{p}}\|+ \|\bm{v}_L \|
  \end{equation}
  For the case of $\|\bm{\omega}_L\|=0$, it follows that
  \begin{equation}
    v_F^x=\|\bm{v}_L \|=v_L^x,
  \end{equation}
  which indicates $v_F^x$ is not related to $\|\bar{\bm{p}}\|$ any more. Given that $\underline{\beta}_L \leq v_L^x \leq \overline{\beta}_L$, there naturally holds
  \begin{equation}
    \underline{\beta}_F<\underline{\beta}_L<v_F^x<\overline{\beta}_L<\overline{\beta}_F,
  \end{equation}
  demonstrating the saturation constraint of $v_F^x$ is always satisfied, so that the desired relative position $\bar{\bm{p}}$ can be chosen arbitrarily. For the case of $\|\bm{\omega}_L\|\ne 0$, based on (\ref{eq_v_F^x_ineq}), the constraint $\underline{\beta}_F \leq v_F^x \leq \overline{\beta}_F$ is satisfied if there holds
  \begin{align}
    \| \bm{\omega}_L\| \|\bar{\bm{p}}\|+ \|\bm{v}_L \| & \leq \overline{\beta}_F, \label{eq_lemma_sat_cons_low_1} \\
    -\| \bm{\omega}_L\| \|\bar{\bm{p}}\|+ \|\bm{v}_L \| & \geq \underline{\beta}_F. \label{eq_lemma_sat_cons_upper_1}
  \end{align}
  It can be derived from (\ref{eq_lemma_sat_cons_low_1}) that $\|\bm{p}\|$ should satisfy
  \begin{equation}\label{eq_lemma_sat_cons_low_2}
    \|\bm{p}\|\leq\frac{\overline{\beta}_F-\|\bm{v}_L \|}{\| \bm{\omega}_L\|}.
  \end{equation}
  Note that (\ref{eq_lemma_sat_cons_low_2}) is able to hold if $\|\bm{p}\|$ satisfies
  \begin{equation}\label{eq_lemma_sat_cons_low_3}
    \|\bm{p}\|\leq\frac{\overline{\beta}_F-\|\bm{v}_L \|_{\max}}{\| \bm{\omega}_L\|_{\max}} = \frac{\overline{\beta}_F-\overline{\beta}_L}{\alpha_L},
  \end{equation}
  where $\|\bm{\omega}_L\|\leq\alpha_L$ and $\|\bm{v}_L\|\leq\overline{\beta}_L$ are utilized. Similarly, we can obtain that the inequality (\ref{eq_lemma_sat_cons_upper_1}) holds as long as $\|\bm{p}\|$ satisfies
  \begin{equation}\label{eq_lemma_sat_cons_upper_3}
    \|\bm{p}\|\leq\frac{\|\bm{v}_L\|_{\min}-\underline{\beta}_F}{\|\bm{\omega}_L\|_{\max}} = \frac{\underline{\beta}_L-\underline{\beta}_F}{\alpha_L}.
  \end{equation}
  Based on (\ref{eq_lemma_sat_cons_low_3}) and (\ref{eq_lemma_sat_cons_upper_3}), once $\|\bm{\omega}_L\|\ne 0$, the follower's linear velocity satisfies the constraint $\underline{\beta}_F \leq v_F^x \leq \overline{\beta}_F$ if the desired relative position $\bar{\bm{p}}$ satisfies
  \begin{equation}
    \|\bm{p}\|=\min\left\{\frac{\overline{\beta}_F-\overline{\beta}_L}{\alpha_L}, \frac{\underline{\beta}_L-\underline{\beta}_F}{\alpha_L}\right\}.
  \end{equation}
  In summary, the linear velocity constraint $\underline{\beta}_F \leq v_F^x \leq \overline{\beta}_F$ holds if the condition in (\ref{eq_p_saturation_constraint}) is satisfied.
\end{proof}

\begin{figure}
  \centering
  \includegraphics[width=0.32\textwidth]{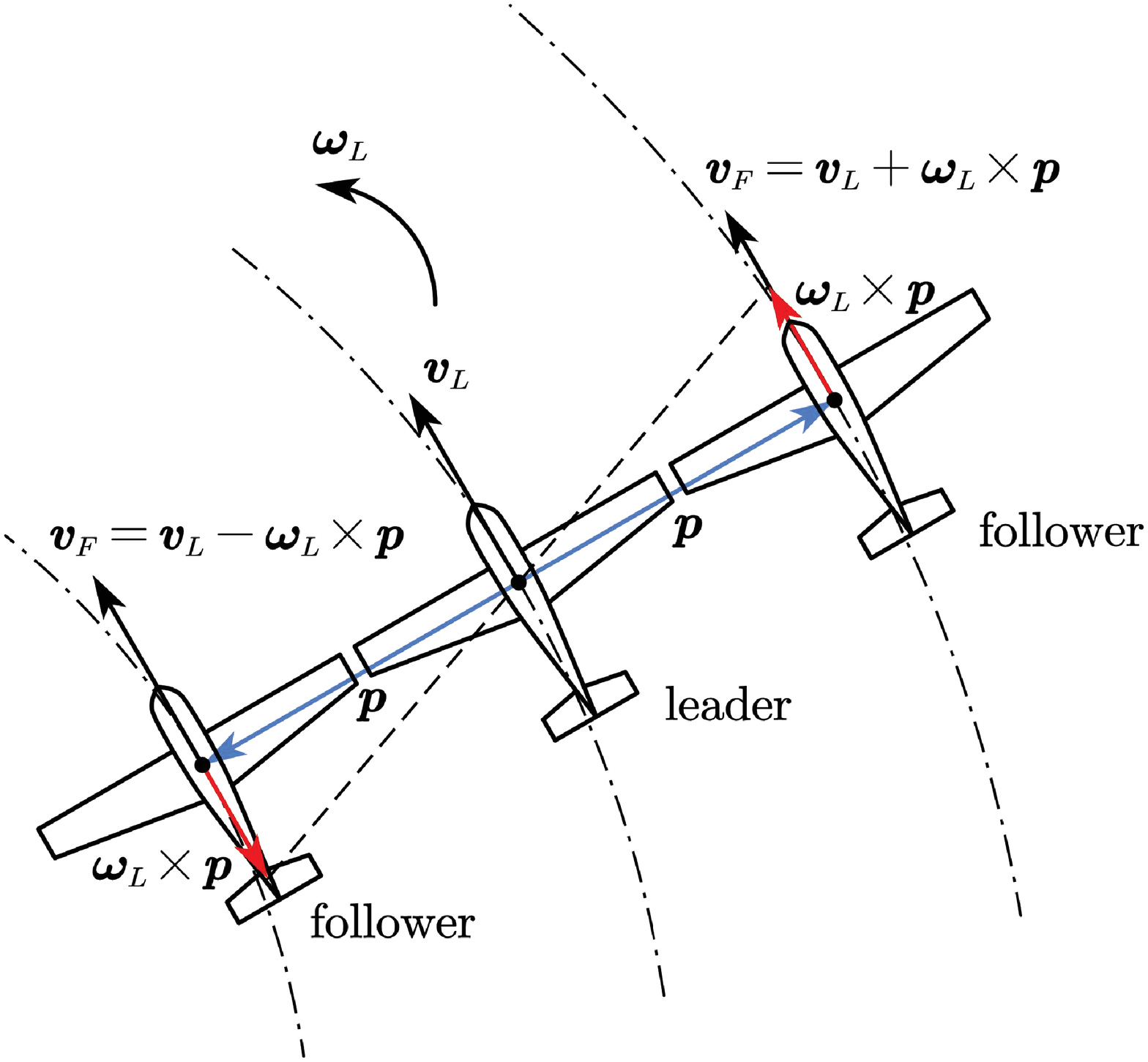}
  \caption{Schematic diagram demonstrating the upper and lower bounds of the follower's saturation constraint.}
  \label{fig_feas_sat}
\end{figure}

\begin{remark}
  It is seen from Theorem~\ref{theo_sat} that the angular speed bounds are identical ($\alpha_F=\alpha_L$), while the linear speed bounds are not ($\underline{\beta}_F<\underline{\beta}_L$, $\overline{\beta}_F>\overline{\beta}_L$). This can be intuitively explained with the aid of Figure~\ref{fig_feas_sat}, where the orientations of all fixed-wing UAVs are identical, and the position vector $\bm{p}$ points laterally. In order to keep RTI, the followers rotate along with the leader at the same angular velocity, i.e., $\bm{\omega}_F=\bm{\omega}_L$. However, regarding the linear velocities, due to the anticlockwise rotation, an additional forward velocity is induced for the right follower, yet an additional backward velocity for the left follower. Then, it is obtained that $\|\bm{v}_F^{\rm outs}\|=\|\bm{v}_L+\bm{\omega}_L\times\bm{p}\|>\|\bm{v}_L\|$ and $\|\bm{v}_F^{\rm ins}\|=\|\bm{v}_L-\bm{\omega}_L\times\bm{p}\|<\|\bm{v}_L\|$. Thus, the linear speed range of the followers should be larger than the leader's.
\end{remark}

\begin{remark}
  Given the fact that the feasibility of $\bar{\bm{p}}$ is always ensured no matter how $\|\bm{\omega}_L\|$ varies in $(0,\alpha_L]$, the conditions of $\|\bar{\bm{p}}\|$ proposed in (\ref{eq_p_saturation_constraint_c2}) may be conservative to some extent. For example, if the angular speed $\|\bm{\omega}_L\|$ is small, that is, $\|\bm{\omega}_L\|=\epsilon$ where $0<\epsilon\ll\alpha_L$, the position norm $\|\bar{\bm{p}}\|$ might not necessarily be chosen as small as (\ref{eq_p_saturation_constraint_c2}). However, in such cases, if it is guaranteed that $\|\bm{\omega}_L\|$ merely varies in a small neighborhood around 0, the value of $\alpha_L$ could be turned down to make the condition (\ref{eq_p_saturation_constraint_c2}) less conservative. Essentially, the range of $\|\bm{\omega}_L\|$ greatly influences the upper bound of $\|\bar{\bm{p}}\|$, so that we can appropriately choose $\alpha_L$ so as to reduce the conservativeness as much as possible.
\end{remark}

\section{RTI formation controller design} \label{sec_control}

In this section, we design the formation controller for the feasible formations of the fixed-wing UAVs. Inspired by \cite{He2021Trajectory}, which deals with the formation tracking of the nonholonomic mobile robots, we introduce a virtual leader related to the desired formation pattern, and then convert leader-follower formation into trajectory tracking. The challenge of designing a formation controller arises from the underactuation of the fixed-wing UAVs, i.e., six DOFs controlled by only four inputs. Therefore, to solve this problem, we design an additional angular velocity to compensate the lack of certain linear velocities. Then, based on the logarithmic feedback in the Lie group ${\rm SE(3)}$, the formation controller is proposed accordingly.

Firstly, we define a virtual leader $\bm{g}_C$, whose configuration is given by
\begin{equation}\label{eq_def_virtual_leader_confi}
  \bm{g}_C=\bm{g}_L\bar{\bm{g}},
\end{equation}
where $\bar{\bm{g}}$ is the desired formation pattern. By taking the time derivative of $\bm{g}_C$, we have
\begin{equation*}
  \dot{\bm{g}}_C=\dot{\bm{g}}_L\bar{\bm{g}}=\bm{g}_L(\bm{\xi}_L^{\wedge})\bar{\bm{g}}=\bm{g}_C{\rm Ad}_{\bar{\bm{g}}^{-1}}\bm{\xi}_L^{\wedge},
\end{equation*}
where the kinematics (\ref{eq_kine}) and the adjoint map (\ref{eq_Ad_def}) are utilized. Define the velocity of the virtual leader as follows
\begin{equation}\label{eq_def_virtual_leader_velocity}
  \bm{\xi}_C^{\wedge}={\rm Ad}_{\bar{\bm{g}}^{-1}}\bm{\xi}_L^{\wedge}.
\end{equation}
Then, the kinematics of the virtual leader is expressed as
\begin{equation}\label{eq_kine_virtual_leader}
  \dot{\bm{g}}_C=\bm{g}_C\bm{\xi}_C^{\wedge}.
\end{equation}

\begin{lemma}\label{lem_form_2_track}
  The follower $\bm{g}_F$ achieves the formation pattern $\bar{\bm{g}}$ with respect to the leader $\bm{g}_L$ if the follower $\bm{g}_F$ tracks the virtual leader $\bm{g}_C$.
\end{lemma}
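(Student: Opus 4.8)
The plan is to show the equivalence between ``$\bm{g}_F$ achieves the formation pattern $\bar{\bm{g}}$ with respect to $\bm{g}_L$'' and ``$\bm{g}_F$ tracks $\bm{g}_C$'' by unwinding the definitions. By the condition of formation achievement (\ref{eq_RTI_def_L_F}), the follower $\bm{g}_F$ achieves the formation $\bar{\bm{g}}$ precisely when $\bm{g}_F=\bm{g}_L\bar{\bm{g}}$. On the other hand, the virtual leader is defined by (\ref{eq_def_virtual_leader_confi}) as $\bm{g}_C=\bm{g}_L\bar{\bm{g}}$. First I would simply substitute: $\bm{g}_F=\bm{g}_L\bar{\bm{g}}=\bm{g}_C$, so the condition of formation achievement is literally $\bm{g}_F=\bm{g}_C$, i.e.\ the follower's configuration coincides with that of the virtual leader. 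Thus if $\bm{g}_F$ tracks $\bm{g}_C$ in the sense that $\bm{g}_F\to\bm{g}_C$ (equivalently $\bm{g}_C^{-1}\bm{g}_F\to\bm{I}_4$ in ${\rm SE}(3)$), then $\bm{g}_F\to\bm{g}_L\bar{\bm{g}}$, which is exactly the (asymptotic) achievement of the formation pattern.

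Next I would check the velocity/consistency side, so that the reduction is genuine and not merely a kinematic coincidence. Differentiating $\bm{g}_C=\bm{g}_L\bar{\bm{g}}$ and using the kinematics (\ref{eq_kine}) together with the adjoint map (\ref{eq_Ad_def}) gives $\dot{\bm{g}}_C=\bm{g}_C\,{\rm Ad}_{\bar{\bm{g}}^{-1}}\bm{\xi}_L^{\wedge}=\bm{g}_C\bm{\xi}_C^{\wedge}$ with $\bm{\xi}_C^{\wedge}={\rm Ad}_{\bar{\bm{g}}^{-1}}\bm{\xi}_L^{\wedge}$ as in (\ref{eq_def_virtual_leader_velocity})--(\ref{eq_kine_virtual_leader}); this is the computation already displayed before the lemma. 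By Lemma~\ref{lem_maintain}, the relation $\bm{\xi}_F^{\wedge}={\rm Ad}_{\bar{\bm{g}}^{-1}}\bm{\xi}_L^{\wedge}=\bm{\xi}_C^{\wedge}$ is exactly the condition of formation maintenance, i.e.\ the virtual leader moves with precisely the velocity the follower must have to keep the formation. Hence ``tracking $\bm{g}_C$'' (configuration plus, upon convergence, velocity) and ``achieving and maintaining $\bar{\bm{g}}$'' refer to the same target trajectory, which is the content of the lemma.

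The main obstacle is not a deep calculation — everything reduces to the substitution $\bm{g}_C=\bm{g}_L\bar{\bm{g}}$ and the already-established kinematics of $\bm{g}_C$ — but rather stating the tracking notion carefully: one must make clear that ``$\bm{g}_F$ tracks $\bm{g}_C$'' means the tracking error $\bm{g}_C^{-1}\bm{g}_F$ (or equivalently $\bm{g}_L^{-1}\bm{g}_F$, since $\bm{g}_C^{-1}\bm{g}_F=\bar{\bm{g}}^{-1}\bm{g}_L^{-1}\bm{g}_F$) tends to the identity in ${\rm SE}(3)$, and to relate this error to the formation-achievement condition (\ref{eq_RTI_def_L_F}) via $\bm{g}_F=\bm{g}_C\iff \bm{g}_C^{-1}\bm{g}_F=\bm{I}_4\iff \bm{g}_F=\bm{g}_L\bar{\bm{g}}$. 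Once that identification is written down, the proof is a one-line argument, and I would present it as such: invoke (\ref{eq_RTI_def_L_F}) and (\ref{eq_def_virtual_leader_confi}) to conclude $\bm{g}_F$ achieves $\bar{\bm{g}}$ iff $\bm{g}_F=\bm{g}_C$, so that tracking $\bm{g}_C$ is equivalent to (asymptotically) achieving the desired formation, which also justifies converting the leader--follower formation problem into a trajectory-tracking problem for $\bm{g}_C$.
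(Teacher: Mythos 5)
Your proposal is correct and follows essentially the same route as the paper: the paper's proof is exactly the one-line substitution of $\bm{g}_C=\bm{g}_L\bar{\bm{g}}$ into $\bm{g}_F=\bm{g}_C$ to recover the formation-achievement condition. Your additional remarks on the velocity consistency of $\bm{\xi}_C^{\wedge}$ and on making the asymptotic tracking notion precise are sound but go beyond what the paper's (deliberately terse) proof records.
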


\begin{proof}
  Once the follower $\bm{g}_F$ tracks the virtual leader $\bm{g}_C$, there holds
  \begin{equation}\label{eq_gF_gC}
    \bm{g}_F=\bm{g}_C.
  \end{equation}
  By substituting (\ref{eq_def_virtual_leader_confi}) into (\ref{eq_gF_gC}), we obtain the condition (\ref{eq_RTI_def_L_F}), which indicates $\bm{g}_F$ achieves the formation pattern $\bar{\bm{g}}$ with respect to $\bm{g}_L$.
\end{proof}

\begin{figure}
  \centering
  \includegraphics[width=0.4\textwidth]{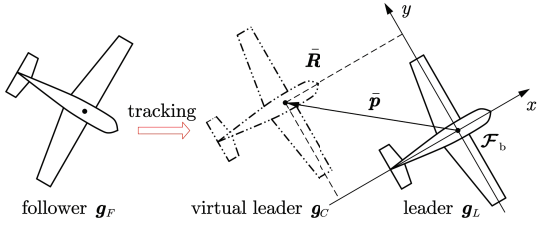}
  \caption{Leader-follower formation with the aid of a virtual leader}
  \label{fig_form_track}
\end{figure}

\begin{remark}
  As shown in Figure~\ref{fig_form_track}, the virtual leader $\bm{g}_C$ given by (\ref{eq_def_virtual_leader_confi}) is defined in the body-fixed frame of the real leader $\bm{g}_L$. Since the formation pattern $\bar{\bm{g}}$ is fixed, the virtual leader will be rigidly attached to the real leader and move along with it. In this way, $\bar{\bm{g}}$ is achieved if the follower $\bm{g}_F$ tracks the virtual leader $\bm{g}_C$.
\end{remark}

Following Lemma~\ref{lem_form_2_track}, the problem to be solved is transformed to trajectory tracking. To this end, we define the relative configuration of $\bm{g}_F$ with respect to $\bm{g}_C$, that is
\begin{equation}\label{eq_g_CF}
  \bm{g}_{CF}=\bm{g}_C^{-1}\bm{g}_F.
\end{equation}
It is obtained that the time derivative of $\bm{g}_{CF}$ is
\begin{equation}\label{eq_dot_g_CF_intial}
  \dot{\bm{g}}_{CF} = \bm{g}_{CF}\bm{\xi}^{\wedge}_{CF},
\end{equation}
where $\bm{\xi}^{\wedge}_{CF}$ is the relative velocity defined by
\begin{equation}\label{eq_xi_CF}
  \bm{\xi}^{\wedge}_{CF}=\bm{\xi}_F^{\wedge}-{\rm Ad}_{\bm{g}_{CF}^{-1}}\bm{\xi}_C^{\wedge}.
\end{equation}
\begin{lemma}\label{lem_track_2_stab}
  The follower $\bm{g}_F$ tracks the virtual leader $\bm{g}_C$ if the relative configuration $\bm{g}_{CF}$ is stabilized to the identity matrix $\bm{I}_4$.
\end{lemma}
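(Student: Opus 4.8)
The plan is to show that stabilizing $\bm{g}_{CF}$ to $\bm{I}_4$ is nothing more than expressing, in the left-invariant error coordinates (\ref{eq_g_CF}), the tracking condition $\bm{g}_F=\bm{g}_C$ already used in Lemma~\ref{lem_form_2_track}. First I would recall that, consistent with the proof of Lemma~\ref{lem_form_2_track}, ``$\bm{g}_F$ tracks $\bm{g}_C$'' means $\bm{g}_F(t)\to\bm{g}_C(t)$, with $\bm{g}_F=\bm{g}_C$ pointwise as the exact-tracking case. From (\ref{eq_g_CF}), and using that every element of ${\rm SE(3)}$ is invertible, rewrite $\bm{g}_{CF}=\bm{g}_C^{-1}\bm{g}_F$ equivalently as $\bm{g}_F=\bm{g}_C\bm{g}_{CF}$. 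Then $\bm{g}_{CF}=\bm{I}_4$ gives $\bm{g}_F=\bm{g}_C$ at once, and conversely $\bm{g}_F=\bm{g}_C$ forces $\bm{g}_{CF}=\bm{g}_C^{-1}\bm{g}_C=\bm{I}_4$; hence the two conditions are equivalent, which settles the exact-tracking statement.

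For the asymptotic statement I would argue with the block structure of ${\rm SE(3)}$. Writing $\bm{g}_{CF}=(\bm{R}_{CF},\bm{p}_{CF})$ and $\bm{g}_C=(\bm{R}_C,\bm{p}_C)$, the identity $\bm{g}_F=\bm{g}_C\bm{g}_{CF}$ reads $\bm{R}_F=\bm{R}_C\bm{R}_{CF}$ and $\bm{p}_F=\bm{R}_C\bm{p}_{CF}+\bm{p}_C$. If $\bm{g}_{CF}\to\bm{I}_4$, i.e. $\bm{R}_{CF}\to\bm{I}_3$ and $\bm{p}_{CF}\to\bm{0}$, then $\|\bm{R}_F-\bm{R}_C\|=\|\bm{R}_C(\bm{R}_{CF}-\bm{I}_3)\|=\|\bm{R}_{CF}-\bm{I}_3\|\to 0$ and $\|\bm{p}_F-\bm{p}_C\|=\|\bm{R}_C\bm{p}_{CF}\|=\|\bm{p}_{CF}\|\to 0$, where both equalities use $\bm{R}_C^T\bm{R}_C=\bm{I}_3$. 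Thus $\bm{g}_F\to\bm{g}_C$, i.e. the follower tracks the virtual leader.

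The only point needing care — which I would flag as a mild subtlety rather than a genuine obstacle — is that the virtual leader's translation $\bm{p}_C(t)$ may be unbounded, since the formation as a whole translates, so one cannot simply bound $\|\bm{g}_F-\bm{g}_C\|$ by $\|\bm{g}_C\|\,\|\bm{g}_{CF}-\bm{I}_4\|$. The block computation above circumvents this: the left factor acting on the error is the orthogonal matrix $\bm{R}_C$, which is norm-preserving on both blocks, so the discrepancy between $\bm{g}_F$ and $\bm{g}_C$ is controlled purely by the size of $\bm{g}_{CF}-\bm{I}_4$, uniformly in time. Finally, I would note that chaining this lemma with Lemma~\ref{lem_form_2_track} reduces the formation problem to stabilizing the error system (\ref{eq_dot_g_CF_intial})--(\ref{eq_xi_CF}) to $\bm{I}_4$, which is precisely the objective for the controller design that follows.
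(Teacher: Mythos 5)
Your argument is correct and matches the paper's, which simply declares the lemma trivial from the algebraic identity $\bm{g}_{CF}=\bm{g}_C^{-1}\bm{g}_F=\bm{I}_4 \Leftrightarrow \bm{g}_F=\bm{g}_C$. Your additional block-wise verification that $\bm{g}_{CF}\to\bm{I}_4$ implies $\bm{g}_F\to\bm{g}_C$ uniformly (via orthogonality of $\bm{R}_C$) is a sound elaboration of the asymptotic case that the paper leaves implicit, not a different route.
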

\begin{proof}
  The proof is trivial based on (\ref{eq_gF_gC}) and (\ref{eq_g_CF}).
\end{proof}

Then, the control task is further converted to stabilization of the relative configuration $\bm{g}_{CF}$ by designing the relative velocity $\bm{\xi}_{CF}^{\wedge}$. The following lemma provides a stabilization control with logarithmic feedback.

\begin{lemma}[\cite{Bullo1995Proportional}]\label{lem_log_feedback}
  Consider the system $\dot{\bm{g}}=\bm{g}\bm{\xi}^{\wedge}$ in the Lie group ${\rm SE(3)}$ and let $k_p>0$ be a control gain. Then, the control law $\bm{\xi}^{\wedge}=-k_p\log_{\rm SE(3)}(\bm{g})$ almost globally stabilizes the state $\bm{g}$ at $\bm{I}_4$ from the initial condition $\bm{g}(0)=(\bm{R}(0),\bm{p}(0))$ satisfying ${\rm tr}(\bm{R}(0))\ne -1$, where $\log_{\rm SE(3)}$ is the logarithmic map in the Lie group $\rm SE(3)$.
\end{lemma}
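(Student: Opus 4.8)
The plan is to pass to exponential (canonical) coordinates of the first kind on ${\rm SE}(3)$ and show that, in these coordinates, the closed loop collapses to a decoupled linear system. Write $\bm{\eta}^{\wedge}=\log_{\rm SE(3)}(\bm{g})\in\mathfrak{se}(3)$, so that $\bm{g}=\exp(\bm{\eta}^{\wedge})$ as long as $\bm{g}$ lies in the set $\mathcal{D}=\{(\bm{R},\bm{p}):\ {\rm tr}(\bm{R})\neq-1\}$ on which $\log_{\rm SE(3)}$ is single-valued and smooth; the excluded set corresponds to a rotation angle $\theta=\pi$, where $\exp$ ceases to be a local diffeomorphism. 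The control law then reads $\bm{\xi}^{\wedge}=-k_p\,\bm{\eta}^{\wedge}$, and the task is to analyze the trajectory $\bm{\eta}^{\wedge}(t)$.

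First I would invoke the left-trivialized differential of the exponential map: if $\bm{g}(t)=\exp(\bm{\eta}^{\wedge}(t))$ then $\bm{g}^{-1}\dot{\bm{g}}={\rm dexp}_{-\bm{\eta}^{\wedge}}(\dot{\bm{\eta}}^{\wedge})$, where ${\rm dexp}_{X}=\sum_{k\ge0}\frac{1}{(k+1)!}({\rm ad}_{X})^{k}$. Since $\dot{\bm{g}}=\bm{g}\bm{\xi}^{\wedge}$ gives $\bm{g}^{-1}\dot{\bm{g}}=\bm{\xi}^{\wedge}$, one obtains $\dot{\bm{\eta}}^{\wedge}={\rm dexp}_{-\bm{\eta}^{\wedge}}^{-1}(\bm{\xi}^{\wedge})=-k_p\,{\rm dexp}_{-\bm{\eta}^{\wedge}}^{-1}(\bm{\eta}^{\wedge})$, which is valid on $\mathcal{D}$, where ${\rm dexp}_{-\bm{\eta}^{\wedge}}$ is invertible. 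The key elementary fact is ${\rm ad}_{\bm{\eta}^{\wedge}}(\bm{\eta}^{\wedge})=[\bm{\eta}^{\wedge},\bm{\eta}^{\wedge}]=\bm{0}$, so only the $k=0$ term of the series survives and ${\rm dexp}_{-\bm{\eta}^{\wedge}}(\bm{\eta}^{\wedge})=\bm{\eta}^{\wedge}$, hence also ${\rm dexp}_{-\bm{\eta}^{\wedge}}^{-1}(\bm{\eta}^{\wedge})=\bm{\eta}^{\wedge}$. Substituting yields the closed-loop dynamics in logarithmic coordinates
\[
  \dot{\bm{\eta}}^{\wedge}=-k_p\,\bm{\eta}^{\wedge},
\]
a linear, decoupled, exponentially stable system with solution $\bm{\eta}^{\wedge}(t)=e^{-k_p t}\bm{\eta}^{\wedge}(0)$.

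Next I would transfer this back to the manifold and verify forward invariance. Fix $\bm{g}(0)\in\mathcal{D}$, i.e. ${\rm tr}(\bm{R}(0))\neq-1$, equivalently a rotation angle $\theta(0)\in[0,\pi)$; then $\bm{\eta}^{\wedge}(0)$ is well defined. Decomposing $\bm{\eta}^{\wedge}=(\bm{\omega}_{\eta}^{\wedge},\bm{v}_{\eta})$ componentwise, the rotational part obeys $\|\bm{\omega}_{\eta}(t)\|=e^{-k_p t}\,\theta(0)<\pi$ for all $t\ge0$, so the trajectory never leaves $\mathcal{D}$: the logarithmic coordinates remain valid along the whole solution, no finite escape occurs, and $\bm{\eta}^{\wedge}(t)\to\bm{0}$ forces $\bm{g}(t)=\exp(\bm{\eta}^{\wedge}(t))\to\bm{I}_4$ exponentially. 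Equivalently, $V=\tfrac12\langle\bm{\eta}^{\wedge},\bm{\eta}^{\wedge}\rangle$ satisfies $\dot{V}=-2k_pV$. Since the complement of $\mathcal{D}$, namely $\{{\rm tr}(\bm{R})=-1\}$, is closed with empty interior, $\mathcal{D}$ is open and dense in ${\rm SE}(3)$, which is precisely the ``almost global'' claim.

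The hard part will be the middle step: correctly identifying the closed-loop vector field in logarithmic coordinates, which hinges on the derivative formula for $\exp$/$\log_{\rm SE(3)}$ and on the cancellation ${\rm dexp}_{-\bm{\eta}^{\wedge}}^{-1}(\bm{\eta}^{\wedge})=\bm{\eta}^{\wedge}$; once the reduction to $\dot{\bm{\eta}}^{\wedge}=-k_p\bm{\eta}^{\wedge}$ is in hand, the remainder is routine bookkeeping about the domain of $\log_{\rm SE(3)}$, forward invariance, and exponential decay. As this statement is quoted from \cite{Bullo1995Proportional}, I would either reproduce this short argument or simply refer the reader there for the technical facts about ${\rm dexp}$.
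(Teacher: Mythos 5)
Your proposal is correct and follows essentially the same route as the source: the paper itself gives no proof of this lemma, citing \cite{Bullo1995Proportional}, and the argument there is exactly the passage to exponential coordinates via the differential of $\exp$ together with the cancellation coming from ${\rm ad}_{\bm{\eta}^{\wedge}}(\bm{\eta}^{\wedge})=\bm{0}$, yielding $\dot{\bm{\eta}}^{\wedge}=-k_p\bm{\eta}^{\wedge}$. Note that your operator ${\rm dexp}_{-\bm{\eta}^{\wedge}}^{-1}$ is precisely the matrix $[\mathcal{B}_X]$ that the paper later uses in the proof of Theorem~3, and your identity ${\rm dexp}_{-\bm{\eta}^{\wedge}}^{-1}(\bm{\eta}^{\wedge})=\bm{\eta}^{\wedge}$ is the property $[\mathcal{B}_X]\bm{X}_{CF}=\bm{X}_{CF}$ invoked there, so your sketch is consistent with the machinery the paper relies on; your forward-invariance check ($\|\bm{\omega}_{\eta}(t)\|=e^{-k_pt}\theta(0)<\pi$) correctly justifies that the logarithm remains single-valued along the closed-loop trajectory.
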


Based on Lemma~\ref{lem_log_feedback}, the relative velocity $\bm{\xi}^{\wedge}_{CF}$ can be designed to be $\bm{\xi}^{\wedge}_{CF}=-k_p\log_{\rm SE(3)}(\bm{g}_{CF})$. By substituting it into (\ref{eq_xi_CF}), we can obtain the formulation of the follower's velocity, that is
\begin{equation}\label{eq_xi_F_v1}
  \bm{\xi}_F^{\wedge}=-k_p\log_{\rm SE(3)}(\bm{g}_{CF})+{\rm Ad}_{\bm{g}_{CF}^{-1}}\bm{\xi}_C^{\wedge}.
\end{equation}
Considering the fact that the virtual leader is regarded merely as a middle variable, we are supposed to cancel the terms related with the configuration $\bm{g}_C$ and velocity $\bm{\xi}_C^{\wedge}$. By substituting (\ref{eq_def_virtual_leader_confi}) and (\ref{eq_def_virtual_leader_velocity}) into (\ref{eq_xi_F_v1}), it follows that
\begin{equation*}
  \bm{\xi}_F^{\wedge} = -k_p\log_{\rm SE(3)}\left((\bm{g}_L\bar{\bm{g}})^{-1}\bm{g}_{F}\right)+{\rm Ad}_{(\bm{g}_{F}^{-1}\bm{g}_{C}\bar{\bm{g}})}{\rm Ad}_{\bar{\bm{g}}^{-1}}\bm{\xi}_L^{\wedge}.
\end{equation*}
For simplicity, we define a relative configuration
\begin{equation}\label{eq_g01_def}
  \bm{g}_{LF}=\bm{g}_L^{-1}\bm{g}_F,
\end{equation}
and then $\bm{\xi}_F^{\wedge}$ can be further expressed as
\begin{equation}\label{eq_xi_F_v2}
  \bm{\xi}_F^{\wedge}=-k_p\log_{\rm SE(3)}(\bar{\bm{g}}^{-1}\bm{g}_{LF})+{\rm Ad}_{\bm{g}_{LF}^{-1}}\bm{\xi}_L^{\wedge}.
\end{equation}

Although we have derived the control input in (\ref{eq_xi_F_v2}), it cannot directly serve as the formation controller. This is because (\ref{eq_xi_F_v2}) is only applicable to fully actuated systems, while the fixed-wing UAVs are restricted by the nonholonomic constraints. Thus, in the following, we will make the control input become nonholonomic constrained.

For simplicity, we refer to (\ref{eq_xi_F_v2}) as the standard velocity, and let a vector $\bm{\Xi}\in\mathbb{R}^6$ denote it, that is
\begin{equation}\label{eq_def_XI_upper}
  \bm{\Xi}=\begin{bmatrix}
             \bm{\Omega} & \bm{\Lambda}
           \end{bmatrix}^T\triangleq
           \left(-k_p\log_{\rm SE(3)}(\bar{\bm{g}}^{-1}\bm{g}_{LF})+{\rm Ad}_{\bm{g}_{LF}^{-1}}\bm{\xi}_L^{\wedge}\right) ^{\vee},
\end{equation}
where $\bm{\Omega}=[\Omega^x\ \ \Omega^y\ \ \Omega^z]^T\in\mathbb{R}^3$ and $\bm{\Lambda}=[\Lambda^x\ \ \Lambda^y\ \ \Lambda^z]^T\in\mathbb{R}^3$ can be regarded as the standard angular velocity and linear velocity, respectively. Due to the nonholonomic constraints, the linear velocities $v^y$ and $v^z$ of the fixed-wing UAVs are always zero. Then, the standard velocity components $\Lambda^y$ and $\Lambda^z$ derived from (\ref{eq_def_XI_upper}) cannot be provided to the fixed-wing UAVs through the input channels of $v^y$ and $v^z$. That is to say, the real linear velocity of the fixed-wing UAVs is actually
\begin{equation}\label{eq_Lambda_NH}
  \bm{\Lambda}_{NH}=\begin{bmatrix}
                 \Lambda^x & 0 & 0
               \end{bmatrix}^T.
\end{equation}
Note that the nonholonomic constraints restrict the direction of the linear velocity. This motivates us that if the direction of the real linear velocity $\bm{\Lambda}_{NH}$ is tuned to be aligned with that of the standard linear velocity $\bm{\Lambda}$, then the fixed-wing UAVs can be controlled by $\bm{\Xi}$ given in (\ref{eq_def_XI_upper}) to accomplish the formation task. Therefore, inspired by \cite{He2022Exponential}, we construct new feedback terms by employing $\Lambda^y$ and $\Lambda^z$, and let these terms be additional angular velocities, which can rotate the real linear velocity $\bm{\Lambda}_{NH}$ to the standard one $\bm{\Lambda}$ indeed. The details of the design process are given below.

We firstly construct a rotation matrix $\bm{R}_z$ around the $z$-axis of the body-fixed frame $\bm{\mathcal{F}}_{\rm b}$. Define a vector $\bm{n}=[\Lambda^x\ \ \Lambda^y\ \ 0]^T$, and it can be verified that
\begin{equation*}
  \bm{n}\cdot\bm{e}_3=0,
\end{equation*}
where the symbol ``$\cdot$" represents the dot-product and $\bm{e}_3=[0\ \ 0\ \ 1]^T$. Then, we define another vector $\bm{n}^{\bot}$ by the cross-product of $\bm{e}_3$ and $\bm{n}$, that is
\begin{equation*}
  \bm{n}^{\bot}=\bm{e}_3\times\bm{n}=\begin{bmatrix}
                                       -\Lambda^y & \Lambda^x & 0
                                     \end{bmatrix}^T.
\end{equation*}
Thus, $\{\bm{n},\bm{n}^{\bot},\bm{e}_3\}$ constitute a set of orthogonal vectors in $\mathbb{R}^3$. Based on such a group of vectors, we can construct the following orthogonal matrix
\begin{equation}\label{eq_def_Rz}
  \bm{R}_z=\begin{bmatrix}
             \frac{\bm{n}}{\|\bm{n}\|} & \frac{\bm{n}^{\bot}}{\|\bm{n}^{\bot}\|} & \bm{e}_3
           \end{bmatrix},
\end{equation}
which satisfies $\bm{R}_z^T\bm{R}_z=\bm{I}_3$ and ${\rm det}\bm{R}_z=1$. Therefore, $\bm{R}_z\in{\rm SO(3)}$ is a rotation matrix around the $z$-axis.

Next, similarly, we can construct a rotation matrix $\bm{R}_y$ around the $y$-axis. Define a vector $\bm{m}=[\Lambda^x\ \ 0\ \ \Lambda^z]^T$, which is orthogonal to the unit vector $\bm{e}_2=[0\ \ 1\ \ 0]^T$. Then, $\bm{m}^{\bot}$ can be given by
\begin{equation*}
  \bm{m}^{\bot}=\bm{e}_2\times\bm{m}=\begin{bmatrix}
                                       \Lambda^z & 0 & -\Lambda^x
                                     \end{bmatrix}^T.
\end{equation*}
Thus, the rotation matrix around the $y$-axis can be constructed by
\begin{equation}\label{eq_def_Ry}
  \bm{R}_y=\begin{bmatrix}
             \frac{\bm{m}}{\|\bm{m}\|} & \bm{e}_2 & \frac{\bm{m}^{\bot}}{\|\bm{m}^{\bot}\|}
           \end{bmatrix}.
\end{equation}

In fact, $\bm{R}_z$ and $\bm{R}_y$ represent the rotation matrices from the vector $\bm{\Lambda}_{NH}$ to $\bm{\Lambda}$. Hence, based on the logarithm of $\bm{R}_z$ and $\bm{R}_y$, we design the following angular velocity
\begin{equation}
  \bm{\Omega}_{AD} = (\log_{\rm SO(3)}(\bm{R}_y))^{\vee}+(\log_{\rm SO(3)}(\bm{R}_z))^{\vee}. \label{eq_Omega_AD}
\end{equation}
With the help of the additional angular velocity $\bm{\Omega}_{AD}$, the formation controller is given below.

\begin{theorem}\label{theo_form_contr}
  Let $(\bm{g}_L,\bm{\xi}_L^{\wedge})$ denote the configuration and velocity of the leader, and let $(\bm{g}_F,\bm{\xi}_F^{\wedge})$ denote those of the follower. Concerning a desired formation pattern $\bar{\bm{g}}$ given by Theorem~\ref{theo_RTI_general}, design the following controller
  \begin{subequations}\label{eq_form_contr_w_v}
  \begin{align}
    \bm{\omega}_F & = \bm{\Omega} + k_a\bm{\Omega}_{AD}, \label{eq_form_contr_w} \\
    \bm{v}_F & =\bm{\Lambda}_{NH}. \label{eq_form_contr_v}
  \end{align}
  \end{subequations}
  Then, the controller given in (\ref{eq_form_contr_w_v}) makes the follower $\bm{g}_F$ realize the formation pattern $\bar{\bm{g}}$ with respect to the leader $\bm{g}_L$, where $\bm{\Omega}$, $\bm{\Omega}_{AD}$, $\bm{\Lambda}_{NH}$ are defined in (\ref{eq_def_XI_upper}), (\ref{eq_Omega_AD}), (\ref{eq_Lambda_NH}), respectively, and $k_a$ is a positive control gain.
\end{theorem}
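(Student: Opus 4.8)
The plan is to leverage the three reduction lemmas (Lemmas~\ref{lem_form_2_track}, \ref{lem_track_2_stab}, and \ref{lem_log_feedback}) to turn the formation problem into a stabilization problem, and then to show that the extra angular-velocity term $k_a\bm{\Omega}_{AD}$ does not destroy the convergence established for the standard (fully actuated) velocity $\bm{\Xi}$ in \eqref{eq_def_XI_upper}. Concretely, I would first recall that by Lemma~\ref{lem_form_2_track} it suffices for $\bm{g}_F$ to track $\bm{g}_C$, and by Lemma~\ref{lem_track_2_stab} it suffices to drive $\bm{g}_{CF}=\bm{g}_C^{-1}\bm{g}_F$ to $\bm{I}_4$. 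If the follower could realize the \emph{full} standard velocity $\bm{\xi}_F^{\wedge}=\bm{\Xi}^{\wedge}$ of \eqref{eq_xi_F_v2}, then substituting into \eqref{eq_xi_CF} would give $\bm{\xi}^{\wedge}_{CF}=-k_p\log_{\rm SE(3)}(\bm{g}_{CF})$, and Lemma~\ref{lem_log_feedback} would yield almost-global convergence $\bm{g}_{CF}\to\bm{I}_4$. So the crux is to account for the discrepancy between the applied velocity $(\bm{\omega}_F,\bm{v}_F)=(\bm{\Omega}+k_a\bm{\Omega}_{AD},\,\bm{\Lambda}_{NH})$ and the standard velocity $(\bm{\Omega},\bm{\Lambda})$.

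The key step is to show that the feedback-coupling construction makes the \emph{actual} linear velocity in the earth frame equal (in direction, up to the nonholonomic-admissible component) to the standard one. Here I would exploit the geometric meaning of $\bm{R}_y,\bm{R}_z$ in \eqref{eq_def_Ry}--\eqref{eq_def_Rz}: by construction they rotate the constrained vector $\bm{\Lambda}_{NH}=[\Lambda^x\ 0\ 0]^T$ onto $\bm{\Lambda}=[\Lambda^x\ \Lambda^y\ \Lambda^z]^T$, since $\bm{R}_z$ aligns the $x$-axis with the projection $\bm{n}=[\Lambda^x\ \Lambda^y\ 0]^T$ and $\bm{R}_y$ with $\bm{m}=[\Lambda^x\ 0\ \Lambda^z]^T$; hence $\log_{\rm SO(3)}(\bm{R}_y)^{\vee}+\log_{\rm SO(3)}(\bm{R}_z)^{\vee}=\bm{\Omega}_{AD}$ generates precisely the rotation carrying $\bm{\Lambda}_{NH}$ to $\bm{\Lambda}$. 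The argument then runs: with the attitude evolving under the augmented angular velocity, the body-frame constrained velocity $\bm{\Lambda}_{NH}$, when expressed in the frame that has been co-rotated by $\bm{\Omega}_{AD}$, coincides with $\bm{\Lambda}$; equivalently, the closed loop for $\bm{g}_{CF}$ reduces to $\dot{\bm{g}}_{CF}=\bm{g}_{CF}(-k_p\log_{\rm SE(3)}(\bm{g}_{CF}))^{\wedge}$ plus a term driven by $\bm{\Omega}_{AD}$ that vanishes on the attitude subspace where $\bm{R}_y=\bm{R}_z=\bm{I}_3$, i.e.\ where $\Lambda^y=\Lambda^z=0$. I would then invoke Lemma~\ref{lem_log_feedback} on the reduced system, and argue (via a cascade / composite-Lyapunov argument, or by noting that $\Lambda^y,\Lambda^z\to 0$ as $\bm{g}_{LF}\to\bar{\bm{g}}$ so $\bm{\Omega}_{AD}\to\bm{0}$) that $\bm{g}_{CF}\to\bm{I}_4$ almost globally. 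Tracing back through Lemmas~\ref{lem_track_2_stab} and \ref{lem_form_2_track} gives $\bm{g}_F=\bm{g}_L\bar{\bm{g}}$, i.e.\ the formation pattern is realized.

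The main obstacle I anticipate is making rigorous the claim that the auxiliary rotations $\bm{R}_y,\bm{R}_z$ ``exactly'' convert the underactuated kinematics into the fully-actuated one. The subtlety is that $\bm{\Omega}_{AD}$ is an \emph{angular velocity}, not an instantaneous reorientation, so one must show that the integrated effect on the attitude aligns the thrust direction correctly without introducing a lag, and that the two separate single-axis logarithms composed additively in \eqref{eq_Omega_AD} indeed generate the composite rotation $\bm{R}_y\bm{R}_z$ (or $\bm{R}_z\bm{R}_y$) rather than some other path-dependent rotation; this is where the structure $\bm{n}\cdot\bm{e}_3=0$ and $\bm{m}\cdot\bm{e}_2=0$ (the rotations are about complementary axes acting on disjoint coordinate pairs) must be used carefully. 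A secondary technical point is handling the gain $k_a$: one needs $k_a$ large enough (a time-scale separation) so that the attitude-alignment dynamics are fast relative to the $k_p$-driven configuration dynamics, and I would expect the proof either to impose such a condition or to carry a singular-perturbation / cascaded-stability estimate. Finally, the almost-global caveat from Lemma~\ref{lem_log_feedback} (the ${\rm tr}(\bm{R}(0))\neq -1$ initial condition) must be carried through, and one should check the construction is well posed, i.e.\ $\|\bm{n}\|,\|\bm{n}^{\bot}\|,\|\bm{m}\|,\|\bm{m}^{\bot}\|\neq 0$, which holds because $\Lambda^x>0$ is bounded away from zero along feasible formations.
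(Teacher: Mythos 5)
Your plan follows the paper's proof almost exactly at the structural level: the same reduction chain (Lemma~\ref{lem_form_2_track} $\to$ Lemma~\ref{lem_track_2_stab} $\to$ stabilization of $\bm{g}_{CF}$ at $\bm{I}_4$), the same decomposition of the applied velocity into the standard velocity $\bm{\Xi}$ plus a compensation $\bm{\Xi}_{AD}=\bigl(k_a\bm{\Omega}_{AD},\,-[0\ \ \Lambda^y\ \ \Lambda^z]^T\bigr)$, and the same key structural fact that this compensation vanishes at the desired configuration because feasibility of $\bar{\bm{g}}$ forces $\Lambda^y=\Lambda^z=0$ there. Where you diverge is in how the final stability claim is closed. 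The paper passes to exponential coordinates $\bm{X}_{CF}=(\log_{\rm SE(3)}(\bm{g}_{CF}))^{\vee}$, uses $\dot{\bm{X}}_{CF}=[\mathcal{B}_{X}]\bm{\xi}_{CF}$ and $[\mathcal{B}_{X}]\bm{X}_{CF}=\bm{X}_{CF}$ to arrive at
\begin{equation*}
\dot{\bm{X}}_{CF}=-k_p\bm{X}_{CF}+[\mathcal{B}_{X}]\bm{\Xi}_{AD},
\end{equation*}
and then concludes by citing Theorem~1 of \cite{He2022Exponential}; it does not carry out the cascade/vanishing-perturbation analysis you sketch, nor does it invoke Lemma~\ref{lem_log_feedback} directly. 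Consequently, two of the obstacles you anticipate are handled differently than you expect: no large-$k_a$ or time-scale-separation condition appears anywhere (the theorem and the paper's proof require only $k_a>0$, so if your route genuinely needed large $k_a$ it would establish a strictly weaker statement), and the question of whether the additively composed logarithms in (\ref{eq_Omega_AD}) generate exactly the composite rotation $\bm{R}_y\bm{R}_z$ is never confronted, since the paper treats $\bm{\Xi}_{AD}$ only as a perturbation to be absorbed by the cited result. Your identification of the crux --- that the residual term must not destroy the nominal exponential convergence --- is exactly right; be aware that the paper's own resolution of that crux is a citation rather than an argument, so a self-contained proof along your cascade lines would in fact have to supply the estimate the paper omits.
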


\begin{proof}
  According to Lemma~\ref{lem_form_2_track} and Lemma~\ref{lem_track_2_stab}, the follower $\bm{g}_F$ realizes the formation pattern $\bar{\bm{g}}$ with respect to the leader $\bm{g}_L$, if the relative configuration $\bm{g}_{CF}$ converges to $\bm{I}_4$. Thus, we have to prove that the controller (\ref{eq_form_contr_w_v}) stabilizes the kinematics (\ref{eq_dot_g_CF_intial}) at $\bm{I}_4$.

  For the sake of illustration, we express the kinematics (\ref{eq_dot_g_CF_intial}) under the exponential coordinate. Referring to \cite{Bullo1995Proportional}, the exponential coordinates of $\bm{g}_{CF}$ are defined by
  \begin{equation}\label{eq_X_CF}
    \bm{X}_{CF}=(\log_{\rm SE(3)}(\bm{g}_{CF}))^{\vee}.
  \end{equation}
  Based on Lemma~4 in \cite{Bullo1995Proportional}, the time derivative of $\bm{X}_{CF}$ is
  \begin{equation}\label{eq_dot_X_CF_1}
    \dot{\bm{X}}_{CF}=[\mathcal{B}_{X}]\bm{\xi}_{CF},
  \end{equation}
  where $[\mathcal{B}_{X}]$ is a matrix related to $\bm{X}_{CF}$. Substituting (\ref{eq_xi_CF}) into (\ref{eq_dot_X_CF_1}), we have
  \begin{equation}\label{eq_dot_X_CF_2}
    \dot{\bm{X}}_{CF}=[\mathcal{B}_{X}](\bm{\xi}_F-({\rm Ad}_{\bm{g}_{CF}^{-1}}\bm{\xi}_C^{\wedge})^{\vee}),
  \end{equation}
  which is the formulation of the kinematics (\ref{eq_dot_g_CF_intial}) under the exponential coordinates $\bm{X}_{CF}$. According to the definition of $\bm{X}_{CF}$ in (\ref{eq_X_CF}), once $\bm{g}_{CF}=\bm{I}_4$, there holds $\bm{X}_{CF}=\bm{0}$. Then, what we need to prove is that the system (\ref{eq_dot_X_CF_2}) is stabilized at $\bm{X}_{CF}=\bm{0}$ by the controller $\bm{\xi}_F$ given in (\ref{eq_form_contr_w_v}).

  Based on the components in (\ref{eq_form_contr_w})(\ref{eq_form_contr_v}), the controller $\bm{\xi}_F$ can be written in a vector form  as
  \begin{align}\label{eq_xi_F_pf_1}
    \bm{\xi}_F=
    \begin{bmatrix}
      \bm{\omega}_F \\
      \bm{v}_F
    \end{bmatrix}=
    \begin{bmatrix}
      \bm{\Omega} \\
      \bm{\Lambda}
    \end{bmatrix}+
    \begin{bmatrix}
      k_a\bm{\Omega}_{AD} \\
      \bm{\Lambda}_{AD}
    \end{bmatrix},
  \end{align}
  where $\bm{\Lambda}_{AD}$ is defined by $\bm{\Lambda}_{AD}=-[0\ \ \Lambda^y\ \ \Lambda^z]^T$. Define $\bm{\Xi}_{AD}=[k_a\bm{\Omega}_{AD}\ \ \bm{\Lambda}_{AD}]^T$, and with $\bm{\Xi}$ given in (\ref{eq_def_XI_upper}), the controller $\bm{\xi}_F$ can be further written as
  \begin{equation}\label{eq_xi_F_pf_2}
    \bm{\xi}_F=\bm{\Xi}+\bm{\Xi}_{AD}.
  \end{equation}
  Substituting (\ref{eq_xi_F_pf_2}) into (\ref{eq_dot_X_CF_2}), we obtain the following closed-loop system
  \begin{align}\label{eq_dot_X_CF_3}
    \dot{\bm{X}}_{CF} &= [\mathcal{B}_{X}](\bm{\Xi}+\bm{\Xi}_{AD}-({\rm Ad}_{\bm{g}_{CF}^{-1}}\bm{\xi}_C^{\wedge})^{\vee}).
  \end{align}
  Then, substituting (\ref{eq_def_XI_upper}) into (\ref{eq_dot_X_CF_3}), it follows that
  \begin{align}\label{eq_dot_X_CF_4}
    \dot{\bm{X}}_{CF} = &[\mathcal{B}_{X}](-k_p(\log_{\rm SE(3)}(\bar{\bm{g}}^{-1}\bm{g}_{LF}))^{\vee}+({\rm Ad}_{\bm{g}_{LF}^{-1}}\bm{\xi}_L^{\wedge})^{\vee} \nonumber \\ &+\bm{\Xi}_{AD}-({\rm Ad}_{\bm{g}_{CF}^{-1}}\bm{\xi}_C^{\wedge})^{\vee})
  \end{align}
  Utilizing the definitions in (\ref{eq_g01_def})(\ref{eq_def_virtual_leader_confi})(\ref{eq_def_virtual_leader_velocity})(\ref{eq_g_CF})(\ref{eq_X_CF}), we further formulate the closed-loop system (\ref{eq_dot_X_CF_4}) to be
  \begin{align}\label{eq_dot_X_CF_5}
    \dot{\bm{X}}_{CF} = &[\mathcal{B}_{X}](-k_p\bm{X}_{CF}+({\rm Ad}_{\bm{g}_{CF}^{-1}}\bm{\xi}_C^{\wedge})^{\vee} \nonumber \\ &+\bm{\Xi}_{AD}-({\rm Ad}_{\bm{g}_{CF}^{-1}}\bm{\xi}_C^{\wedge})^{\vee}) \nonumber \\
     = &-k_p[\mathcal{B}_{X}]\bm{X}_{CF}+[\mathcal{B}_{X}]\bm{\Xi}_{AD}.
  \end{align}
  With the definition of $[\mathcal{B}_{X}]$ in \cite{Bullo1995Proportional}, it can be derived that $[\mathcal{B}_{X}]\bm{X}_{CF}=\bm{X}_{CF}$. Then, the closed-loop system (\ref{eq_dot_X_CF_5}) can be written as
  \begin{equation}\label{eq_dot_X_CF_6}
    \dot{\bm{X}}_{CF} = -k_p\bm{X}_{CF}+[\mathcal{B}_{X}]\bm{\Xi}_{AD}.
  \end{equation}
  According to the proof of Theorem~1 in \cite{He2022Exponential}, $\bm{X}_{CF}=\bm{0}$ is the exponentially stable equilibrium of the closed-loop system (\ref{eq_dot_X_CF_6}). This indicates that the follower $\bm{g}_F$ realizes the formation pattern $\bar{\bm{g}}$ with respect to the leader $\bm{g}_L$ under the controller $\bm{\xi}_F$ in (\ref{eq_form_contr_w_v}).
\end{proof}

\begin{remark}
  The additional angular velocity $\bm{\Omega}_{AD}$ in (\ref{eq_Omega_AD}) can be further understood from two aspects. On the one hand, intuitively, $\bm{\Omega}_{AD}$ is derived from the constructed rotation matrices $\bm{R}_z$ and $\bm{R}_y$, which represent the rotation from the real linear velocity vector $\bm{\Lambda}_{NH}$ to the standard velocity vector $\bm{\Lambda}$, so that $\bm{\Omega}_{AD}$ provides an extra rotation that compensates the lack of linear velocities $v^y$ and $v^z$. On the other hand, theoretically, the additional angular velocity $\bm{\Omega}_{AD}$ (which describes the rotation around $x-,y-,z-$axis) is the feedback of the standard linear velocities $\Lambda^y$ and $\Lambda^z$ (which describes the translation along $y-,z-$axis), demonstrating the idea of feedback coupling in the control theory of underactuated systems. Note that $\Lambda^y$ and $\Lambda^z$ originally cannot be imported to the system due to the nonholonomic constraints. But with the help of $\bm{\Omega}_{AD}$, the underactuated directions becomes coupled with the actuated directions, leading to the fact that the underactuated states are able to be controlled by the feedback in the actuated directions.
\end{remark}

\section{Simulation examples}\label{sec_simulation}

This section provides two numerical simulation examples to verify the effectiveness of the proposed results.

\begin{figure}[ht]
  \centering
  \includegraphics[width=0.2\textwidth,trim=0 0 0 0,clip]{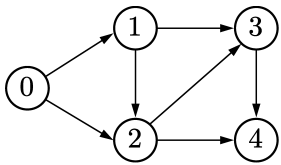}
  \caption{Communication topology of 5 fixed-wing UAVs}
  \label{fig_topol}
\end{figure}

The first example shows the formation of 5 fixed-wing UAVs interacted by a directed acyclic graph as shown in Figure~\ref{fig_topol}. The reference trajectory of the leader is chosen as two cases: a line and a helix. Note that the line reference trajectory implies the leader's angular velocity is zero, i.e., $\bm{\omega}_L=\bm{0}_{3\times 1}$. According to Corollary~\ref{theo_RTI_identity}, the desired relative rotation matrix of each UAV can be chosen as $\bar{\bm{R}}=\bm{I}_3$, indicating that all of the UAVs pointing to the same direction in the formation. Then, under the line reference trajectory, we set the desired formation pattern to be $\bar{\bm{g}}=(\bm{I}_3,\bar{\bm{p}})$, where $\bm{I}_3$ represents the rotation matrix and the position vector $\bar{\bm{p}}$ composes a wedge-shaped formation pattern. Regarding the helix reference trajectory, the position vector $\bar{\bm{p}}$ also forms a wedge shape as the above, while the rotation matrix $\bar{\bm{R}}$ is not the identity matrix but given in Theorem~\ref{theo_RTI_general}. That is to say, in this case, the orientations of the fixed-wing UAVs cannot be identical but should be decided by the nonholonomic constraints.

\begin{figure}[ht]
  \centering
  \subfigure[Line reference trajectory]{
    \includegraphics[width=0.45\textwidth,trim=5 5 5 20,clip]{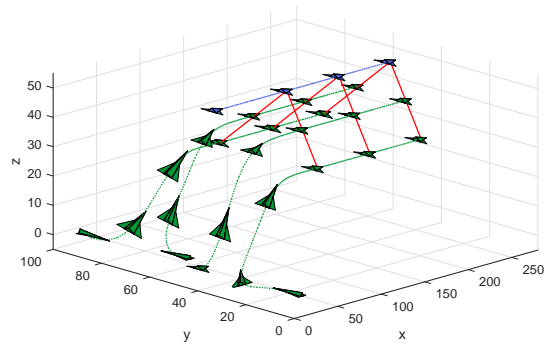}
    \label{fig_5_line_tra}}
  \subfigure[Helix reference trajectory]{
    \includegraphics[width=0.45\textwidth,trim=5 5 5 20,clip]{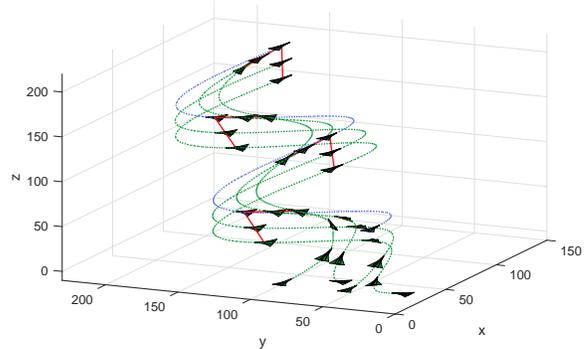}
    \label{fig_5_helix_tra}}
  \caption{Trajectories of 5 fixed-wing UAVs in RTI formation}
  \label{fig_sim_tra}
\end{figure}

Figure~\ref{fig_sim_tra} and Figure~\ref{fig_sim_confi} demonstrate the simulation results. The trajectories of the fixed-wing UAVs are shown in Figure~\ref{fig_sim_tra}, where the leader is marked in blue and the followers are in green. It can be seen from this figure, especially from Figure~\ref{fig_5_helix_tra}, that the whole formation shape can rotate and translate along with the reference trajectory, like a single rigid body, demonstrating that the formation pattern is RTI. Figure~\ref{fig_sim_confi} illustrates the relative configurations of each follower with respect to the leader, which are expressed in the leader's body-fixed frame. As we can see, the relative attitude angles $\bar{\phi},\bar{\theta},\bar{\psi}$ in these two cases are different. When the leader moves along a straight line, $\bar{\phi},\bar{\theta},\bar{\psi}$ can all be zero as shown in Figure~\ref{fig_5_line_confi}, which is guaranteed by Corollary~\ref{theo_RTI_identity}. But once the leader has nonzero angular velocity, as shown in Figure~\ref{fig_5_helix_confi}, $\bar{\theta}$ and $\bar{\psi}$ are not zero anymore, but decided by (\ref{eq_theta_feas}) and (\ref{eq_psi_feas}), respectively, which results from the nonholonomic constraints of the fixed-wing UAVs.

\begin{figure}[ht]
  \centering
  \subfigure[Line reference trajectory]{
    \includegraphics[width=0.45\textwidth,trim=5 5 5 5,clip]{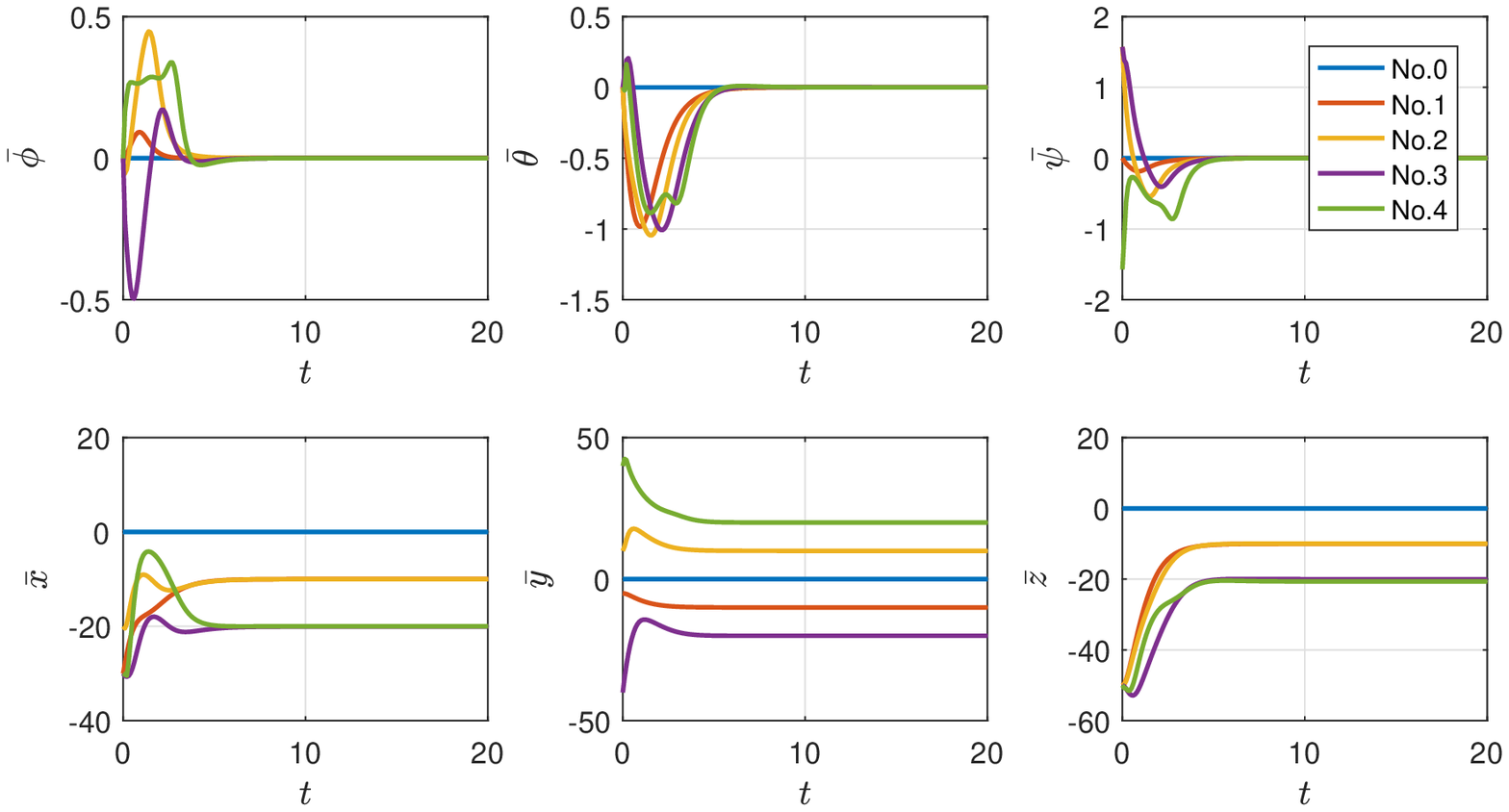}
    \label{fig_5_line_confi}}
  \subfigure[Helix reference trajectory]{
    \includegraphics[width=0.45\textwidth,trim=5 5 5 5,clip]{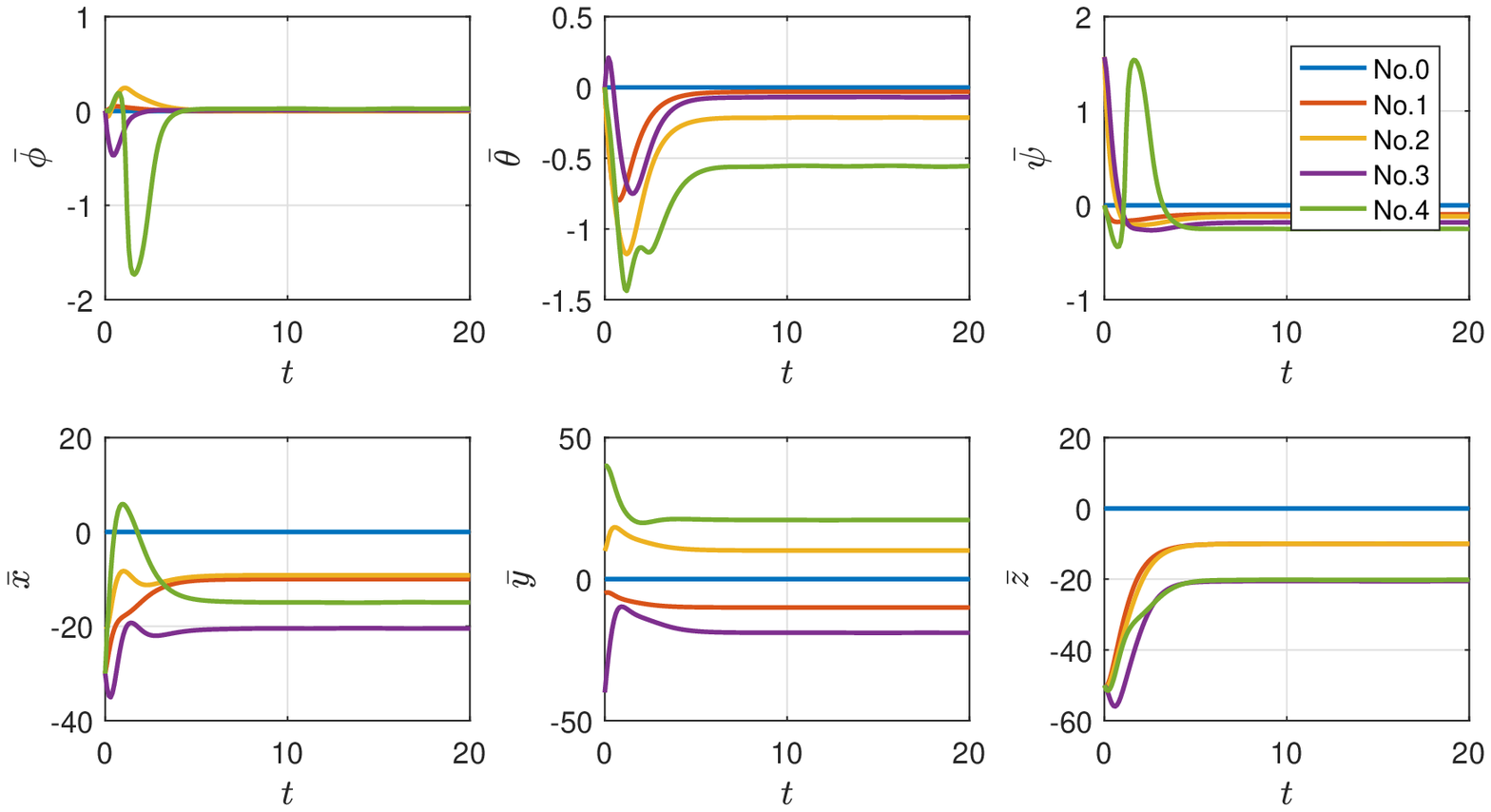}
    \label{fig_5_helix_confi}}
  \caption{Relative configurations of 5 fixed-wing UAVs in RTI formation}
  \label{fig_sim_confi}
\end{figure}

In the second example, we provide the simulation of 10 fixed-wing UAVs under a more complicated reference trajectory, where the leader's angular velocity $\bm{\omega}_L$ is a piecewise continuously differentiable function as given in Table~\ref{tab_angular}, and the leader's linear velocity $\bm{v}_L$ is set to be constant. It can be observed that such a reference trajectory contains the straight line motion, 2-D rotation and 3-D rotation. The simulation time is chosen as $T=100$s, and the results are presented in Figure~\ref{fig_sim_10}, where Figure~\ref{fig_10_tra} provides the trajectories of 10 fixed-wing UAVs in 3D, and the projections to $x-z$ plane and $y-z$ plane are given in Figure~\ref{fig_10_tra_1} and Figure~\ref{fig_10_tra_2}, respectively. It is illustrated in these figures that the formation pattern can move along the reference trajectory, exhibiting the RTI characteristics as a single rigid body.

\begin{table}[t]\small
\renewcommand{\arraystretch}{1.5}
\caption{Components of the leader's angular velocity $\bm{\omega}_L$}
\label{tab_angular}
\centering
\begin{threeparttable}
\begin{tabular}{lp{1.7cm}p{1.8cm}p{1.8cm}}
\hline
\null & $\omega_L^x$ & $\omega_L^y$ & $\omega_L^z$ \\
\hline
$0\leq t \leq 20$ & $0$                       & $-0.15\sin a(t)\tnote{*}$    & 0                            \\
$20< t \leq 30$   & $0$                       & $0$                          & $-0.25\sin b(t)\tnote{*}$    \\
$30< t \leq 50$   & $0$                       & $0$                          & $0$                           \\
$50< t \leq 55$   & $0.1\sin c(t)\tnote{*} $  & $0.15\sin c(t)\tnote{*} $    & $0.2\sin c(t)\tnote{*} $      \\
$t>55$            & $0.1$                     &  $0.15$                      &  $0.2$                          \\
\hline
\end{tabular}
\begin{tablenotes}
  \footnotesize
  \item[*] $a(t)=0.1\pi t$, $b(t)=0.1\pi (t-20)$, $c(t)=0.1\pi (t-50)$.
\end{tablenotes}
\end{threeparttable}
\end{table}

\begin{figure*}
  \centering
  \begin{minipage}[b]{0.65\textwidth}
    \subfigure[Trajectories in 3-D space]{
      \includegraphics[width=0.95\textwidth,trim=55 30 40 20,clip]{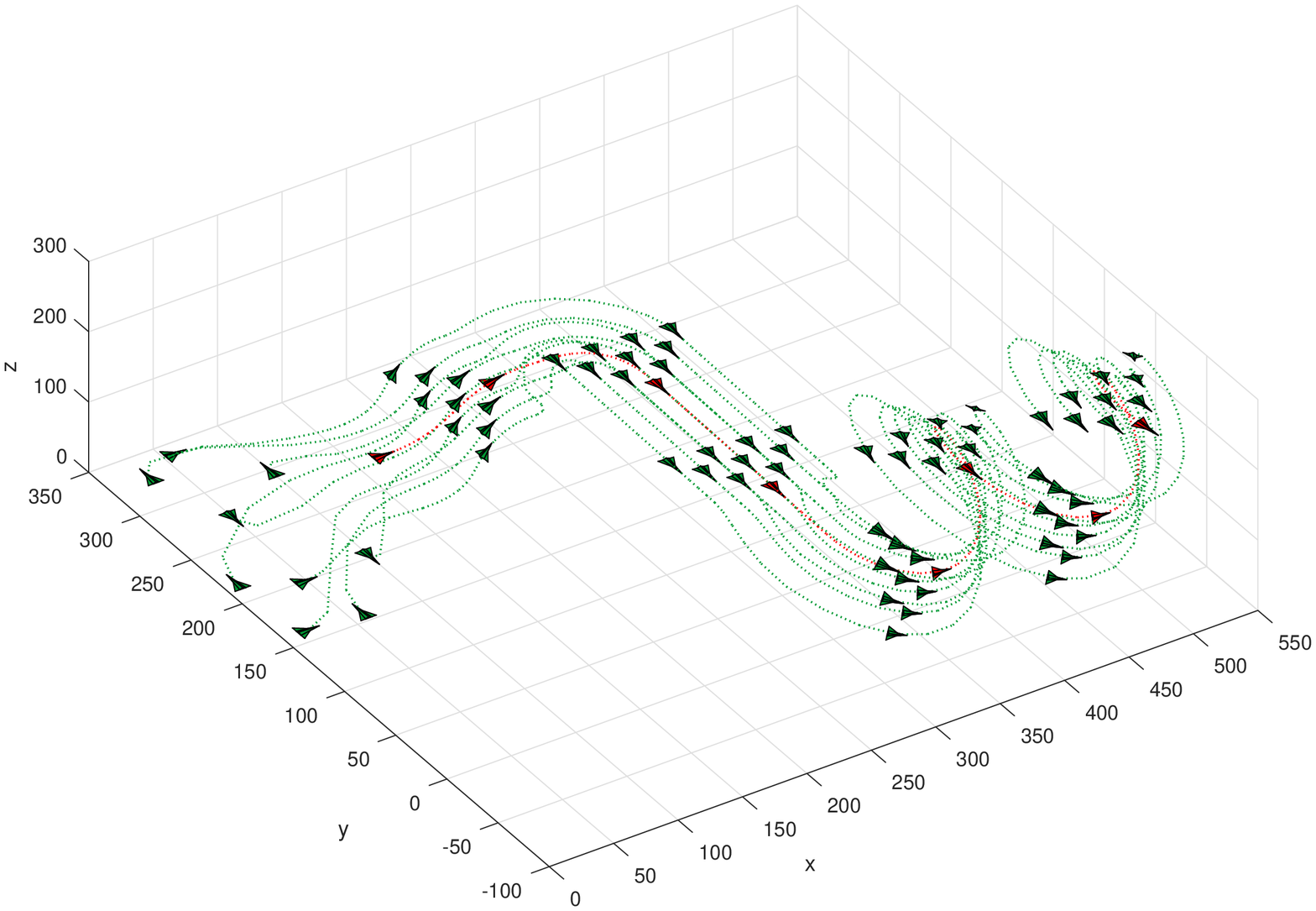}
      \label{fig_10_tra}}
  \end{minipage}
  \begin{minipage}[b]{0.3\textwidth}
    \subfigure[Trajectories in $x-z$ plane]{
      \includegraphics[width=0.95\textwidth,trim=5 5 5 20,clip]{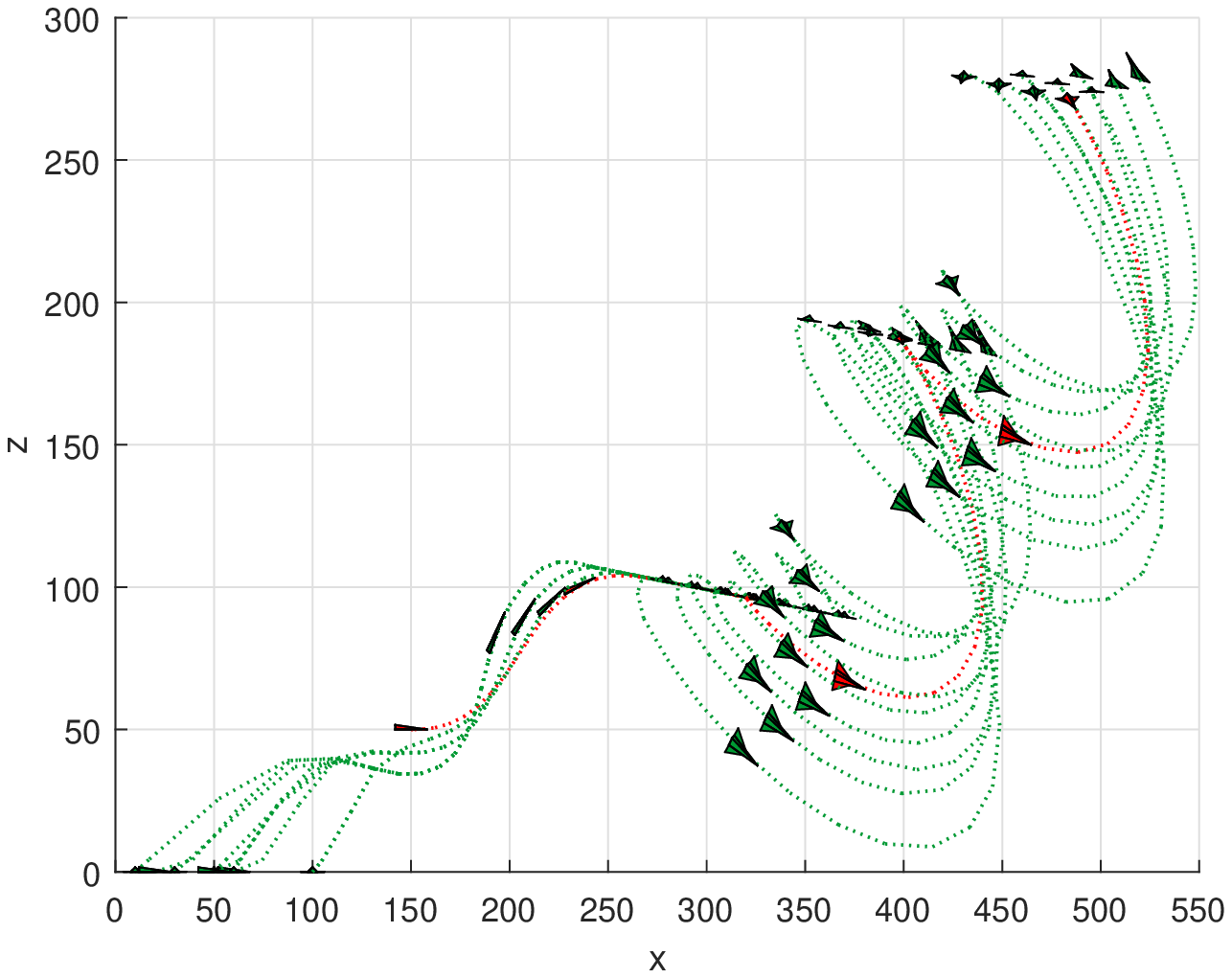}
      \label{fig_10_tra_1}} \\
    \subfigure[Trajectories in $y-z$ plane]{
      \includegraphics[width=0.95\textwidth,trim=5 3 5 20,clip]{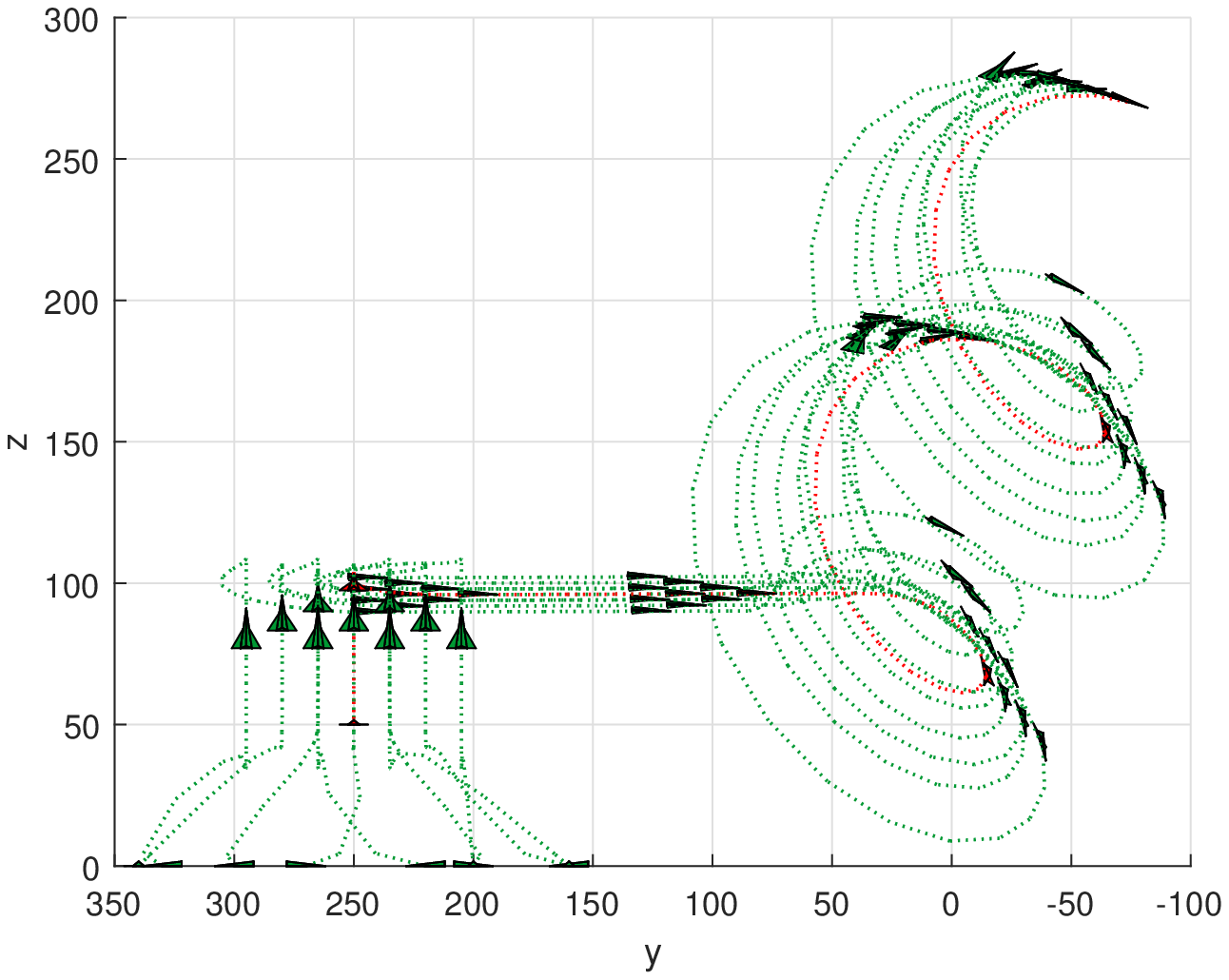}
      \label{fig_10_tra_2}}
  \end{minipage}
  \caption{Trajectories of 10 fixed-wing UAVs in RTI formation}
  \label{fig_sim_10}
\end{figure*}

\section{Conclusion} \label{sec_conclusion}

In this paper, we have investigated the RTI formation problem of the fixed-wing UAVs by proposing the formation feasibility and designing the control strategy. Particularly, the fixed-wing UAV is modelled by a rigid body in 3D, whose kinematics evolves in the Lie group ${\rm SE(3)}$. The novelty of this paper lies in the roto-translation invariance of the whole formation pattern, which characterizes an overall rigid-body motion. Furthermore, the formation feasibility has been presented under the nonholonomic and input saturation constraints, which lays the foundation to the formation control. In addition, we have employed the idea of feedback coupling in the design of control strategy so as to handle the underactuation of fixed-wing UAVs, which achieves the objective of controlling more DOFs with fewer inputs. Future works will focus on the formation problem in more practical scenarios, such as obstacle-cluttered environments, measurement noises, and so on.

\ifCLASSOPTIONcaptionsoff
  \newpage
\fi



\bibliographystyle{IEEEtran}
\bibliography{IEEEabrv,mybibfile}
\end{document}